\theoremstyle{plain}
\newtheorem{Proposition}{Proposition}
\newtheorem{Lemma}{Lemma}
\newtheorem{Definition}{Definition}
\newtheorem{Theorem}{Theorem}
\DeclareMathOperator{\vect}{vec}
\DeclareMathOperator{\diag}{diag}
\newcommand{\dumbbell}{{\textcolor{black}{\bullet\!\!-\!\!\bullet}}}
\newcommand{\ddumbbell}{{\ensuremath{
	\substack{\dumbbell\\[-9.6pt]\dumbbell}
}}}
\newcommand{\arro}{{\ensuremath{
	\frac{\ }{\ }\hspace{-2pt}\triangleleft}
}}
\newcommand{\pfour}{{\ensuremath{
	\vee\!\vee
}}}
\tikzstyle{dir}= [postaction={decorate,
\tikzstyle{dirs}= [postaction={decorate,
\tikzstyle{nd} = [circle, fill=black,
\tikzstyle{bnd} = [circle,fill=black,
\tikzstyle{rnd} = [circle, fill=black!30,
\tikzstyle{brnd} = [circle,fill=black!30,
\newcommand{\TexttriangleS}{%
\raisebox{-0.45mm}{\!\!
\tikz[scale=.2,clip]{\draw[thick]
(210:.6) node[nd] {}--
(  90:.6) node[nd]{}--
( -30:.6) node[nd]{}--
(210:.6) node[nd]{};
}}
}
\newcommand{\Texttriangle}{%
\raisebox{-0.45mm}{\!\!
\tikz[scale=.2,clip]{\draw[thick]
(210:.6) node[nd] {}--
(  90:.6) node[nd]{}--
( -30:.6) node[nd]{}--
(210:.6) node[nd]{};
}}\hspace{-.9mm}
}
\newcommand{\TexttriangleRS}{%
\raisebox{-0.45mm}{\!\!
\tikz[scale=.2,clip]{\draw[thick]
(210:.6) node[rnd] {}--
(  90:.6) node[nd]{}--
( -30:.6) node[nd]{}--
(210:.6) node[rnd]{};
}}
}
\newcommand{\TexttriangleR}{%
\raisebox{-0.45mm}{\!\!
\tikz[scale=.2,clip]{\draw[thick]
(210:.6) node[rnd] {}--
(  90:.6) node[nd]{}--
( -30:.6) node[nd]{}--
(210:.6) node[rnd]{};
}}\hspace{-.9mm}
}
\newcommand{\TexttriangleRR}{%
\raisebox{-0.45mm}{\!\!
\tikz[scale=.2,clip]{\draw[thick]
(210:.6) node[rnd] {}--
(  90:.6) node[rnd]{}--
( -30:.6) node[nd]{}--
(210:.6) node[rnd]{};
}}
}
\newcommand{\Textsquare}{%
\raisebox{-0.45mm}{\!\!
\tikz[scale=.2,clip]{\draw[thick]
(        45:.65) node[nd]{}--
(  45+90:.65) node[nd]{}--
(45+180:.65) node[nd]{}--
(45+270:.65) node[nd]{}--
(        45:.65) node[nd]{};
}}\hspace{-.9mm}
}
\newcommand{\TextpathtwoR}{%
\raisebox{-0.45mm}{\!\!
\tikz[scale=.2,clip]{\draw[thick]
( -30:.6) node[nd]{}--
(210:.6) node[rnd]{}--
(  90:.6) node[nd]{};
}}\hspace{-.9mm}
}
\newcommand{\EqtriangleR}{%
\raisebox{-3mm}{
\tikz[scale=.5,clip]{\draw[very thick]
(210:.6) node[bnd]{}--
(  90:.6) node[bnd]{{\scriptsize 2}}--
( -30:.6) node[bnd]{{\scriptsize 3}}--
(210:.6) node[brnd]{{\scriptsize 1}};
}}\hspace{1.5mm}
}
\newcommand{\FigGraph}{%
\begin{tikzpicture}[thick,scale=.54]
	\tikzstyle{pop}=[circle, draw, color=black!50,fill=black!30,text = black,
					inner sep=.0pt, minimum width=7pt]
	\tikzstyle{popi}=[circle, draw, color = black, fill=black,
					inner sep=0pt, minimum width=4pt]
	\tikzset{pil/.style={thick,color=black,shorten >=2.25pt}}
	\tikzset{pul/.style={thick,color=black!30,shorten >=2.25pt}}
	\draw[fill=black!10,draw=white]
		(210+36:3)--(330-36:3)--(330:2)--(0:0)--(210:2)--cycle;
	\draw[fill=black!10,draw=white]
		(0:0)--(90:2)--(90+36:3)--(45:3)--(90:2)--(6:3)--cycle;
	\draw {(330-36:3) node[popi] {}edge[pul] (210:2) node[popi] {}};
	\draw {(210+36:3) node[popi] {} edge[pul] (330-36:3) node[popi] {}};
	\draw {(0:0) node[popi] {} edge[pul] (330-36:3) node[popi] {}};
	\draw {(330:2) node[popi] {} edge[pul] (330-36:3) node[popi] {}};
	\draw {(0:0) node[popi] {} edge[pul] (210:2) node[popi] {}};
	\draw {(210:2) node[popi] {} edge[pul] (330:2) node[popi] {}};
	\draw {(210:2) node[popi] {} edge[pul] (36+210:3) node[popi] {}};
	\draw {(0:0) node[popi] {} edge[pul] (330:2) node[popi] {}};
	\draw {(330-36:3) node[pop] {{\scriptsize{w}}}};
	\draw {(0:0) node[popi] {} edge[pul] (90:2) node[popi] {}};
	\draw {(90:2) node[popi] {} edge[pul] (90+36:3) node[popi] {}};
	\draw {(90:2) node[popi] {} edge[pul] (6:3) node[popi] {}};
	\draw {(90+36:3) node[popi] {} edge[pul] (45:3) node[popi] {}};
	\draw {(45:3) node[popi] {} edge[pul] (90:2) node[popi] {}};
	\draw {(6:3) node[popi] {} edge[pul] (0:0) node[popi] {}};		
	\draw{(90:2) node[pop] {{\scriptsize{u}}}};
	\draw {(0:0) node[pop] {{\scriptsize{v}}}};
	\draw {(210+36:3) node[popi] {}};
	\draw {(330:2) node[popi] {} edge[pil] (36+330:3) node[popi] {}};
	\draw {(210:2) node[popi] {} edge[pil] (210-36:3) node[popi] {}};
	\draw {(330:2) node[popi] {} edge[pil] (45:3) node[popi] {}};
	\draw {(210-36:3) node[popi] {} edge[pil] (90+36:3) node[popi] {}};
	\draw {(210:3.75) node[popi] {} edge[pil] (210:2) node[popi] {}};
\end{tikzpicture}
}
\newcommand{\EqPathThreeRR}{%
\raisebox{-1pt}{
\tikz[scale=1,clip]{\draw[very thick,black!30,dir]
(180:.6) node[bnd]{{\scriptsize 1}}--
(    0, 0) node[bnd]{{\scriptsize 2}};
\draw[very thick,black!30,dir]
(    0, 0) node[bnd]{{\scriptsize 2}}--
(    0:.6) node[bnd]{{\scriptsize 3}};
}}\hspace{.75mm}
}
\newcommand{\EqTriangleRRd}{%
\raisebox{-3mm}{
\tikz[scale=.5,clip]{\draw[very thick,black!30,dir]
(210:.6) node[bnd]{}--
(  90:.6) node[bnd]{{\scriptsize 2}};
\draw[very thick]
(  90:.6) node[bnd]{{\scriptsize 2}}--
( -30:.6) node[bnd]{{\scriptsize 3}};
\draw[very thick,black!30,dir]
( -30:.6) node[bnd]{{\scriptsize 3}}--
(210:.6) node[bnd]{{\scriptsize 1}};
}}\hspace{1.5mm}
}
\newcommand{\EqTriangleRR}{%
\raisebox{-3mm}{
\tikz[scale=.5,clip]{\draw[very thick]
(210:.6) node[bnd]{}--
(  90:.6) node[brnd]{{\scriptsize 2}}--
( -30:.6) node[bnd]{{\scriptsize 3}}--
(210:.6) node[brnd]{{\scriptsize 1}};
}}\hspace{1.5mm}
}
\newcommand{\EqPathFourRRd}{%
\raisebox{-1pt}{
\tikz[scale=.7,clip]{\draw[very thick,black!30,dir]
(180:  .6) node[bnd]{{\scriptsize 1}}--
(    0,   0) node[bnd]{{\scriptsize 2}};
\draw[very thick]
(    0,   0) node[bnd]{{\scriptsize 2}}--
(    0:  .6) node[bnd]{{\scriptsize 3}};
\draw[very thick,black!30,dir]
(    0:  .6) node[bnd]{{\scriptsize 3}}--
(    0:1.2) node[bnd]{{\scriptsize 4}};
}}\hspace{.75mm}
}
\newcommand{\EqSquareRRd}{%
\raisebox{-2.5mm}{
\tikz[scale=.45,clip]{\draw[very thick,black!30,dirs]
(225:.65) node[bnd]{}--
(135:.65) node[bnd]{{\scriptsize 2}};
\draw[very thick]
(135:.65) node[bnd]{{\scriptsize 2}}--
(  45:.65) node[bnd]{{\scriptsize 3}};
\draw[very thick,black!30,dirs]
(  45:.65) node[bnd]{{\scriptsize 3}}--
( -45:.65) node[bnd]{{\scriptsize 4}};
\draw[very thick]
( -45:.65) node[bnd]{{\scriptsize 4}}--
(225:.65) node[bnd]{{\scriptsize 1}};
}}\hspace{1.5mm}
}
\newcommand{\EqCompleteRR}{%
\raisebox{-3mm}{
\tikz[scale=.5,clip]{\draw[very thick,black!30,dirs]
(225:.65) node[bnd]{}--
(135:.65) node[bnd]{};
\draw[very thick,black!30,dirs]
(  45:.65) node[bnd]{}--
( -45:.65) node[bnd]{};
\draw[very thick]
(135:.65) node[bnd]{}--
( -45:.65) node[bnd]{};
\draw[very thick]
(  45:.65) node[bnd]{}--
(225:.65) node[bnd]{};
\draw[very thick]
(135:.65) node[bnd]{{\scriptsize 2}}--
(  45:.65) node[bnd]{{\scriptsize 3}};
\draw[very thick]
( -45:.65) node[bnd]{{\scriptsize 4}}--
(225:.65) node[bnd]{{\scriptsize 1}};
}}\hspace{1.5mm}
}
\title{Fast counting of medium-sized rooted subgraphs}
\author{P-A. Maugis, S. C. Olhede and P. J. Wolfe\\
{\em University College London}\footnote{This work was supported in part by the US Army Research Office under Multidisciplinary University Research Initiative Award 58153-MA-MUR; by the US Office of Naval Research under Award N00014-14-1-0819; by the UK Engineering and Physical Sciences Research Council under Mathematical Sciences Leadership Fellowship EP/I005250/1, Established Career Fellowship EP/K005413/1, Developing Leaders Award EP/L001519/1, and Award EP/N007336/1; by the UK Royal Society under a Wolfson Research Merit Award; and by Marie Curie FP7 Integration Grant PCIG12-GA-2012-334622 and the European Research Council under Grant CoG 2015-682172NETS, both within the Seventh European Union Framework Program.}}\date{}
\begin{document}

\maketitle
\begin{abstract}
We prove that counting copies of any graph $F$ in another graph $G$ can be achieved using basic matrix operations on the adjacency matrix of $G$. Moreover, the resulting algorithm is competitive for medium-sized $F$: our algorithm recovers the best known complexity for rooted 6-clique counting and improves on the best known for 9-cycle counting. Underpinning our proofs is the new result that, for a general class of graph operators, matrix operations are homomorphisms for operations on rooted graphs.
\end{abstract}

\section{Introduction}
Counting the subgraphs that compose a graph---the number of triangles (\TexttriangleS\nobreak\hspace{.05em plus .01em}), squares (\Textsquare), trees, and so on---is a key primitive of network data analysis. For this reason, many algorithms have been proposed to tackle the subgraph counting problem (a problem also known as graphlet, motif or shape counting or census); see~\cite{chiba1985arboricity,eisenbrand2004complexity,latapy2008main,vassilevska2009clique,gonen2009,marcus2010efficient,kowaluk2013counting,kolda2014counting,bjorklund2014counting,hocevar2014,floderus2015detecting,williams2015finding,jha2015path,talukder2016distributed,ortmann2016quad}, to cite but a few. Driving these works is the realization that larger subgraphs are harder to count, but crucial to the analysis of network data~\cite{solava2012graphlet,Ali2014alignment}.

Here we present the surprising result that despite the complexity of the subgraph counting problem, rooted subgraph counting---i.e., tabulating not only the number of subgraphs but also where they are in the graph---may be efficiently achieved using basic matrix operations. Specifically, we first show that basic matrix operations are primitives of the subgraph counting problem:
\begin{Theorem}\label{thm1}\vspace{-.25\baselineskip}
Counting copies of any subgraph around any location in a graph may be achieved using basic matrix operations on the graph's adjacency matrix.\vspace{-.33\baselineskip}
\end{Theorem}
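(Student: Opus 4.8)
\emph{Proof plan.}\quad The plan is to introduce an algebra of \emph{rooted graphs}, to show that its natural surgeries are realised by matrix operations on the adjacency matrix $A$ of $G$ (this is the homomorphism statement advertised in the abstract), and to deduce Theorem~\ref{thm1} by assembling any rooted subgraph from trivial pieces. A preliminary reduction handles the gap between homomorphisms and embeddings: writing $n=|V(G)|$, let $(F,R)$ be a graph $F$ with an ordered tuple $R=(r_1,\dots,r_k)$ of distinct roots, and let $\mathrm{cp}_{(F,R)}(G)$ be the array on $V(G)^k$ counting copies of $F$ in $G$ sitting at a prescribed root tuple. By M\"obius inversion over the lattice of partitions $\pi$ of $V(F)$ whose blocks are independent sets of $F$, $\mathrm{cp}_{(F,R)}(G)$ is a fixed finite $\mathbb{Z}$-linear combination of the homomorphism-count arrays $N_{(F/\pi,\,R/\pi)}(G)$, where $N_{(F',R')}(G)[v_1,\dots]$ is the number of homomorphisms $\varphi\colon F'\to G$ sending the roots to $v_1,\dots$; since scalar multiples and sums of arrays are basic matrix operations, it suffices to compute each $N_{(F',R')}(G)$ from $A$ by matrix operations.

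The core is a dictionary between elementary surgeries on rooted graphs and elementary operations on these arrays. Gluing two rooted graphs along a common root tuple corresponds to the entrywise (Hadamard) product of their arrays; adjoining a fresh isolated root corresponds to tensoring with the all-ones vector $\mathbf 1$; identifying two roots corresponds to restriction to a diagonal; inserting an edge between two roots corresponds to Hadamard multiplication by (a copy of) $A$; and \emph{unrooting} a vertex---demoting a root to an internal vertex and summing the corresponding index---corresponds to contracting that index against $\mathbf 1$. Each identity is immediate from the expansion of a homomorphism count as a sum over completions of a product of adjacency entries. Note that gluing two two-rooted graphs along one shared root and then unrooting it is precisely matrix multiplication, $N_1N_2$, which is why a careful choice of surgeries can keep every intermediate array of order $\le 2$; for the present qualitative statement we need not worry, and may allow arrays of arbitrary order, each surgery still being a composition of Hadamard products, index transpositions, diagonals, and Kronecker products with $\mathbf 1$---hence matrix operations.

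It remains to show that every rooted graph is constructible from the one-vertex and one-edge rooted graphs using only these surgeries. This is a short induction: fix any ordering $u_1,\dots,u_m$ of $V(F)$; introduce the $u_j$ one at a time as fresh roots, and immediately after introducing $u_j$ glue in, one at a time, the rooted edges joining $u_j$ to each earlier neighbour $u_i$ ($i<j$, $u_iu_j\in E(F)$); finally unroot every vertex not belonging to $R$. Every intermediate object is a legitimate rooted graph and every step is in our dictionary, so $N_{(F,R)}(G)$ is obtained from $A$ by a sequence of matrix operations; combined with the preliminary reduction, so is $\mathrm{cp}_{(F,R)}(G)$, which proves the theorem.

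The step I expect to be the main obstacle is the preliminary reduction, where one must make the inclusion--exclusion genuinely categorical: each admissible partition $\pi$ must induce a well-defined rooted structure on the simple graph $F/\pi$ (the root tuple may gain repeated entries when two roots fall in one block, which is harmless since it merely reads a diagonal of $N_{(F/\pi,R/\pi)}(G)$), and the integer coefficients must be exactly the M\"obius numbers translating labelled homomorphism counts into counts of \emph{unlabelled} rooted copies, which forces one to keep track of the automorphisms of $F$ fixing $R$ pointwise. A secondary point---relevant only if ``basic matrix operation'' is read strictly in two dimensions---is that the higher-order arrays appearing above must be re-expressed through ordinary matrix products, Hadamard products, transposes, and multiplications by $\mathbf 1$ and by $0/1$ selection matrices; performing this with as few operations as possible, equivalently choosing the vertex ordering so as to minimise the order of every intermediate array, is precisely the quantitative question taken up in the results that follow.
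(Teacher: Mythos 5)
Your proposal is correct, but it takes a genuinely different route from the paper's. You work throughout with \emph{homomorphism}-count arrays, for which the surgery dictionary holds exactly (gluing along a shared root tuple is an entrywise product with no side conditions, unrooting is contraction against $\mathbf 1$), build $F$ one vertex at a time, and convert homomorphisms to copies by M\"obius inversion over the partition lattice together with division by the number of root-fixing automorphisms --- the step you rightly flag as the delicate one. The paper never leaves the world of copies: its connection matrices $\kappa(F,G)$ count (not necessarily induced) subgraph copies directly, so no inclusion--exclusion is needed, but the price is that its product identities (items 2 and 3 of its Proposition) only hold when the intermediate bi-rooted graphs are \emph{fully rooted}, so that every entry of $\kappa$ is $0$ or $1$ and the Hadamard product counts unions of copies, with a combinatorial constant $c_{F,F''}$ absorbing the overcount in the operator product. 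Its induction is the mirror image of yours: delete a vertex and split $F$ into the edges at that vertex and the rest, instead of adding a vertex and gluing in its back-edges. Your version buys cleaner algebra and no fully-rooted side condition, at the cost of a Bell-number-sized M\"obius sum and automorphism bookkeeping; the paper's version buys the property that every intermediate matrix is itself the connection matrix of a genuine rooted subgraph, which is exactly what it exploits in Theorem~2 to extract short, low-complexity formulas. Both arguments stay inside the allowed operation set $\mathfrak F$: your diagonal restrictions and $\mathbf 1$-tensorings are the same $0/1$ selection matrices and Kronecker products that appear in the paper's Lemma~1.
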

Then, we show that this reduction of subgraph counting to matrix operations yields an efficient counting algorithm for medium-sized subgraphs:
\begin{Theorem}\label{thm2}\vspace{-.25\baselineskip}
Let $G$ be a simple graph with $n$ vertices, $m$ edges and $l$ paths of length 2. Let $\omega_k(p,q)$ be the complexity of computing the product of $k$ square matrices each of size $p$ with $q$ entries. Then, the complexity of counting all edge-rooted subgraphs of order $k$ for $k\leq 6$ in $G$ is $O\big(\omega_4(2m,2l)\big)$, and the complexity of counting all $k$-cycles for $k\leq 9$ in $G$ is $O\big(\omega_6(2m,2l)\big)$.\vspace{-.33\baselineskip}
\end{Theorem}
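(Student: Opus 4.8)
The plan is to instantiate Theorem~\ref{thm1} carefully --- tracking not just which matrix operations appear but over which index set --- and then to bound the number of factors in the longest matrix product that arises. The central object is the $2m\times 2m$ matrix $B$ whose rows and columns are the ordered edges of $G$ and which records, by a $1$ in position $(\vec e,\vec f)$, that $\vec f$ may follow $\vec e$ (the tail of $\vec f$ is the head of $\vec e$, non-backtracking); its nonzero pattern is exactly the set of ordered length-$2$ paths, so that $B$ has $2l$ nonzero entries, and a product of $j$ copies of $B$ records non-backtracking walks on $j{+}1$ consecutive edges. The whole proof amounts to showing that each count in question is a fixed $\mathbb{Z}$-linear combination of products of at most $4$ (for edge-rooted subgraphs of order $\le 6$), resp.\ at most $6$ (for $k$-cycles with $k\le 9$), copies of $B$ --- possibly with $O(1)$ diagonal matrices spliced between the factors --- together with $O(1)$ further Hadamard products, transposes, diagonal extractions and row sums, each costing $O(m+l)$. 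Granting this, the product steps dominate and, by collapsing each maximal chain of products into a single multi-matrix product, the costs are $O(\omega_4(2m,2l))$ and $O(\omega_6(2m,2l))$ as claimed.

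\emph{Edge-rooted subgraphs of order $k\le 6$.} By Theorem~\ref{thm1} and the homomorphism property underpinning it, the count of copies of an edge-rooted graph $F$ is built compositionally from a decomposition of $F$ into pieces glued along the root edge. Concretely one fixes the root ordered edge $\vec e$ and grows the remaining $k-2\le 4$ vertices of $F$ outward: each new vertex contributes one factor of $B$ (starting a new branch away from the root) or a Hadamard product with such a factor (a vertex attaching to several already-placed ones), while vertex identifications and the requirement that $F$ occur as an honest subgraph --- distinct vertices, all of $F$'s edges present --- are handled by Hadamard products, diagonal operations, and a bounded inclusion--exclusion. The decisive economy is that fixing $\vec e$ forces the first and last edges incident to the root to lie in prescribed sets, so two of the consecutive-follow constraints are discharged by restriction rather than by multiplication; a short enumeration of the finitely many connected edge-rooted graphs on $\le 6$ vertices then shows that the deepest product multiplies at most $4$ copies of $B$, a bound attained by the $6$-cycle rooted at an edge. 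One must interleave Hadamard restrictions so that no intermediate matrix ever exceeds $O(l)$ nonzeros, since powers of $B$ are in general dense --- this is precisely why the relevant resource is the sparse multi-matrix product $\omega_4(2m,2l)$ and not iterated dense squaring.

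\emph{$k$-cycles for $k\le 9$.} Here we are free to root wherever it is cheapest, and the right choice is a \emph{cherry} (an unordered length-$2$ path): a $k$-cycle contains $k$ cherries, there are exactly $2l$ ordered cherries in $G$, and --- the key point --- that number equals the sparsity of $B$, so summing one prescribed bilinear form in a product of $B$'s over all rooting cherries costs only $O(l)$. A $k$-cycle through the cherry $x\!-\!y\!-\!z$ is a non-backtracking $z$--$x$ walk on the remaining $k-2$ edges whose vertices are distinct and avoid $x$ and $y$; fixing the cherry discharges the two follow-constraints linking it to the ends of the walk, so the walk count is an entry-block of $B^{\,k-3}$ --- a product of at most $6$ copies of $B$ when $k\le 9$, with the maximum attained at $k=9$ --- and the passage from walks to simple cycles is a bounded inclusion--exclusion over vertex-coincidence patterns, each correction term again a product of at most $6$ copies of $B$ with $O(1)$ diagonal matrices spliced in. Summing over the $2l$ rooting cherries and dividing by the number of oriented cherries in a $k$-cycle gives the cycle counts at cost $O(\omega_6(2m,2l))$. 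Rooting instead at a length-$3$ path would lower the product depth to $5$ but enlarge the set of roots beyond $O(l)$, breaking the sparsity budget --- so the cherry is exactly the sweet spot.

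\emph{Main obstacle.} The difficulty is the bookkeeping behind the two depth bounds rather than any single estimate. One must (i) for each of the finitely many edge-rooted shapes on $\le 6$ vertices choose an order of assembly and insert Hadamard restrictions so that every intermediate matrix stays $2m\times 2m$ with $O(l)$ nonzeros while no product chain exceeds length $4$ --- and for $k=6$ there is no slack, the bound $4$ being attained --- and (ii) carry out the walk-to-subgraph inclusion--exclusion (distinctness of vertices, correctness of the edge set, and for cycles avoidance of the root) so that every correction term still respects the depth bound, which requires understanding exactly how each admissible vertex-coincidence reshapes the underlying product of $B$'s. Designing these decompositions so that the stated $\omega_4$ and $\omega_6$ are actually achieved is where the real work lies.
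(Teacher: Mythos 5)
You have correctly identified the paper's strategy --- work with the non-backtracking matrix $B$, which is $2m\times 2m$ with $2l$ nonzero entries, express each count as a formula in $B$ whose deepest operator-product chain has length $4$ (resp.\ $6$), and for cycles count closed non-backtracking walks and correct by inclusion--exclusion over the degenerate induced subgraphs. But there are two genuine gaps. First, your assembly procedure for edge-rooted graphs of order $6$ (``grow the remaining vertices outward one at a time, handling a vertex attaching to several already-placed ones by a Hadamard product'') does not work as stated: a vertex adjacent to three or more already-placed vertices cannot be encoded by a Hadamard product of matrices indexed by single directed edges, so the intermediate connection matrices would have to be indexed by triples or quadruples of vertices and would no longer be $2m\times 2m$. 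The paper's actual mechanism is different and is the crux of the $\omega_4$ bound: partition $V(F)$ into the root pair $\{a_1,a_2\}$ and two further pairs $\{b_1,b_2\}$, $\{c_1,c_2\}$, take the three induced order-$4$ subgraphs $F_a,F_b,F_c$ on unions of two pairs, and combine their edge-indexed connection matrices as $\big((\kappa(F_a)\kappa(F_b))\cdot\kappa(F_c)\big)\mathbf{1}$ (up to a combinatorial constant); since each order-$4$ connection matrix is itself a depth-$2$ expression in $B$, the whole formula costs one product of four $B$-sparse matrices. Without this (or an equivalent) gluing step, your claim that ``a short enumeration shows the deepest product multiplies at most $4$ copies of $B$'' is unsupported --- try it on the edge-rooted $6$-clique.

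Second, and more fundamentally, the theorem is proved in the paper by exhibiting the formulas: every edge-rooted shape of order at most $5$ gets an explicit expression, order $6$ is covered by the gluing identity above, and the $9$-cycle count is $\Sigma(B^6\cdot B^{3\top})/18$ minus explicitly computed multiples of the counts of the $35$ subgraphs inducible by degenerate closed non-backtracking walks of length at most $9$ --- each of which must itself be checked to be computable within the $\omega_6$ budget. You defer exactly this work (``where the real work lies''), but for a statement of the form ``all of these finitely many formulas have product depth at most $4$ (resp.\ $6$)'' the enumeration is not bookkeeping around the proof; it is the proof.
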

Using naive sparse matrix product~\cite{yuster2005fast}, we obtain that if $m,l=O(n)$---as in real-world graphs~\cite{Barabasi99,clauset2009power}---then the complexity of medium-sized rooted subgraph counting is $O(n^2)$. More generally, the complexity is $O(\min\{lm,m^\omega\})$, with $\omega$ the complexity exponent of the fast matrix product.

In effect, the larger the counted subgraph is, the more matrix based counting becomes competitive. Thus, although graph specific methods outperform matrix based algorithms for small subgraphs, it appears that for medium-sized subgraphs, matrix based algorithms may be more efficient. Specifically, already for order 6 subgraphs, our algorithm recovers the best known complexity for rooted 6-clique counting~\cite{eisenbrand2004complexity}, while no algorithm matches ours for 9-cycle counting~\cite{alon1997cycles}.

\section{Rooted graph counting}\label{Roots}

We define bi-rooted graphs and connection matrices. The algebraic properties of these objects will be our main tools to prove Theorem~\ref{thm1} and~\ref{thm2} in the next section. Bi-rooted graphs generalize rooted graphs, graphlet orbits~\cite{Przulj2007} and partially labelled graphs~\cite{lovasz2012large}. We use them to count the copies of a subgraph connecting pairs of sets of vertices. Then, we define a connection matrix as the arrays tabulating how many copies of a bi-rooted graph connect pairs of sets of vertices. We find that connection matrices generalize most used graph operators.

Classically, rooted graphs are used to count copies around a given vertex; say the number of \Texttriangle\,\!\! containing a given vertex. Formally, this is done by counting the number of \TexttriangleR\,\!\! where one vertex (in grey) is pinned-down at one vertex in the graph. A vertex's degree is a rooted subgraph count: it is the number of edges attached to this vertex. The local clustering coefficient is also computed using rooted subgraph counts: it is the ratio of the number of closed (\TexttriangleRS\nobreak\hspace{.01em plus .01em}) over open (\TextpathtwoR) triangles attached to a vertex. More generally, counting rooted subgraphs is recognized as a fundamental tool from both theoretical~\cite{rucinski1986balanced,lovasz2014automorphism} and applied~\cite{isham2011spread,Przulj2007,solava2012graphlet,Ali2014alignment} viewpoints.

To count copies connecting groups of vertices, we define bi-rooted graphs as having two roots. Therefore, we write:
\begin{Definition}[Bi-rooted graph]\label{root}
\vspace{-.25\baselineskip}
A bi-rooted graph is an ordered triple $[F,r,s]$, where $F=(V(F),E(F))$ is a graph and $r,s$ are two tuples of $V(F)$.\vspace{-.33\baselineskip}
\end{Definition}
For example, denoting $K_3 = (\{1,2,3\},\{12,23,31\})$ the triangle, we have
\vspace{-.33\baselineskip}\[
\EqtriangleR = [K_3,1,\emptyset];\ \ \!%
\EqTriangleRR = [K_3,12,\emptyset].
\vspace{-.33\baselineskip}\]
With this notation, for $F$ a graph and $v$ one of its vertices, $[F,v,\emptyset]$ recovers the most common notion of rooted graphs, or graphlet orbits~\cite{Przulj2007}. Partially labelled graphs~\cite{lovasz2012large} are recovered by $[F,r,\emptyset]$ for $r$ any tuple of vertices in $F$.

\begin{figure}
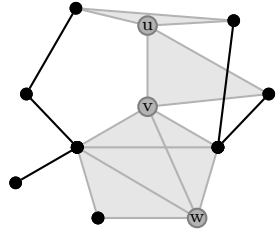

\centering\ 
\begin{minipage}[c]{0.52\textwidth}
        \caption{\label{CountingExample}
        Examples of rooted subgraph counts. The vertex labeled ``u'' has a \protect\TexttriangleR\,\! count of 2 while the vertex labeled ``v'' has a \protect\TexttriangleR\,\! count of 4. The pair of vertices labeled ``v'' and ``w'' has a \protect\TexttriangleRR\,\! count of 2.}
\end{minipage}\hfill
\begin{minipage}[c]{0.45\textwidth}\centering
        \FigGraph
\end{minipage}
\ \vspace{.4\baselineskip}\\
{\centering{
\hfill{\Huge{--------------------------------------------}}\hfill}}
\vspace{-\baselineskip}
\end{figure}

We can now count the number of copies of a subgraph $F$ connecting groups of vertices using bi-rooted graphs. In Fig.~\ref{CountingExample} we give examples of bi-rooted subgraph counting of \TexttriangleR\,\! and \TexttriangleRR\nobreak\hspace{.001em plus .01em}. Given a graph $G$ and two roots $i,j$ (tuples of $V(G)$), it consists in counting the number of ways $[F,r,s]$ can be embedded in $G$ while mapping the roots $r$ and $s$ onto $i$ and $j$ respectively; i.e., the number of isomorphic copies of $[F,r,s]$ in $[G,i,j]$.

Tabulating all such counts in matrix we obtain a {\em connection matrix}:
\begin{Definition}[Connection matrix]\label{count}
\vspace{-.25\baselineskip}
Fix a graph $G$ and a bi-rooted graph $F=[F,r,s]$. We call connection matrix of $F$ in $G$, and denote $\kappa(F,G)$, the matrix indexed by the $|r|$ and $|s|$-tuples of $V(G)$ such that
\[
\kappa(F,G)_{ij}= \#\big\{[F',i,j]\subset [G,i,j]\,:\,[F',i,j]\equiv [F,r,s]\big\},
\]
where $[F',i,j]\subset [G,i,j]$ if $V(F')\subset V(G)$ and $E(F')\subset E(G)$, while $[F',i,j]\equiv [F,r,s]$ if there exists a bijective map $\phi:V(F)\to V(F')$ such that $\phi(r)=\phi(i)$, $\phi(s)=\phi(j)$ and $\phi(u)\phi(v)\in E(F') \Leftrightarrow uv\in E(F)$.\vspace{-.33\baselineskip}
\end{Definition}
With $r=s=\emptyset$, this definition of counting copies recovers that of counting the not necessarily induced copies of $F$ in $G$. Deducing the number of induced copies from the not necessarily induced copies is a classical transformation; see~\cite{marcus2010efficient,jha2015path,kowaluk2013counting} for some examples. Furthermore, Definitions~\ref{root} and~\ref{count} are such that for $v$ and $v'$ two non-automorphic nodes in $F$, then $[F,v,\emptyset]\not\equiv[F,v',\emptyset]$. Therefore, as in counting graphlet orbits, the location of the root in $F$ cannot be ignored~\cite{Przulj2007}.

Crucially, connection matrices generalize most used graph operators. For instance: with $A$ the adjacency matrix of $G$ and $K_2:=(\{1,2\},\{12\})$, we have $A=\kappa([K_2,1,2],G)$; with $I$ the positive part of the incidence matrix of $G$, we have $I = \kappa([K_2,2,12],G)$; with $B$ the non-backtracking or Hashimoto matrix of $G$~\cite{Krzakala13} and $P^{\ddumbbell}_3 =$ $[(\{1,2,3\},$ $\{12,23\}),$ $12, 23]$, we have $B=\kappa(P^{\ddumbbell}_3,G)$.

Although general, our definitions provide sufficient structure for our proofs. As we will see, by using the algebraic properties of connection matrices and bi-rooted graphs, we build formulas using either the adjacency matrix and the non-backtracking matrix to compute any bi-rooted count.

\section{Main Results}\label{Heur}
We now formally state our two theorems and present a summary of their proofs. Before we begin, observe that the subgraph counting problem is equivalent to computing any $\kappa(F,G)$, as the $\kappa(F,G)_{ij}$ are exactly the number of copies of $F$ in $G$ connecting the tuples of vertices $i$ and $j$. Then, to prove Theorem~\ref{thm1}, we express any $\kappa(F,G)$ using classical matrix operation on the adjacency matrix. Formally:
\vspace{-.75\baselineskip}
\paragraph{Theorem~\protect\ref{thm1}}
\emph{Let $\mathfrak{F}$ be the set of all formulas that take a matrix as sole argument and are built using only: operator, entry-wise and Kronecker products, vectorizations, transposes and constants. Then, with $A(G)$ the adjacency matrix of a graph $G$, we have}
\vspace{-.2\baselineskip}\begin{equation}\label{e1}
\forall F,\ \exists f_F\in\mathfrak{F},\ s.t.,\ \forall G,\ \kappa(F,G)=f_F(A(G)).
\end{equation}

Key to the generality and simplicity of our proof is the observation that operations on connection matrices act as homomorphisms over rooted graphs. Let $\digamma$ be the set of rooted graphs that verify~\eqref{e1}, and for each $F\in\digamma$ call $f_F$ the associated formula. Then, for $[F,r,s], [F',r,s]\in\digamma$, we for instance have that (see Lemma~1 in the Appendix):
\begin{equation}\label{trans}
f_{[F,r,s]}^\top = f_{[F,s,r]},\quad
\vect(f_{[F,r,s]}) = f_{[F,rs,\emptyset]};
\end{equation}
and if $r\cup s=V(F)$, denoting ``$\cdot$'' the entry-wise product,
\begin{equation}\label{hada}
f_{[F,r,s]}\cdot f_{[F',r,s]} = f_{[F\cup F',r,s]}.
\end{equation}
Such properties allow us to prove the following proposition in the Appendix:
\vspace{-.05\baselineskip}\begin{Proposition}\label{hp}
Set $F=[F,r,s]$, $F' = [F',r,s]$ and $F''=[F'',s,t]$ in $\digamma$. Then:\vspace{-.3\baselineskip}
\begin{enumerate}\itemsep0em
\item $\forall r',s'\subset s\cup r,\ [F,r',s']\in\digamma$,
\item $r\cup s=V(F)\Rightarrow
[F\cup F',r,s]\in\digamma$,
\item $r\cup s=V(F),\ s\cup t = V(F'')\Rightarrow
[F\cup F'',r,t]\in\digamma$.
\end{enumerate}
\end{Proposition}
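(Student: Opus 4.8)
The plan is to derive all three closure properties from Lemma~1. Besides the identities displayed just before the Proposition---$f_{[F,r,s]}^\top=f_{[F,s,r]}$, $\vect(f_{[F,r,s]})=f_{[F,rs,\emptyset]}$, and $f_{[F,r,s]}\cdot f_{[F',r,s]}=f_{[F\cup F',r,s]}$ (the last for $r\cup s=V(F)$)---I will use that $\mathfrak F$ is closed under the elementary structural manipulations of root coordinates: \emph{marginalizing} a coordinate (left multiplication by a constant $1_n^\top\otimes I$, which sums that coordinate out), \emph{permuting} coordinates (constant permutation matrices), \emph{identifying} two coordinates (a diagonal embedding, built from the map $v\mapsto\vect\big((v\,1_n^\top)\cdot I_n\big)$), and \emph{de-vectorizing} a column into a matrix (an $\mathfrak F$-formula, since $\vect^{-1}(v)=C\,(I_q\otimes v)$ for a suitable constant $C$); each of these is either part of Lemma~1 or an immediate consequence of it and the definition of $\mathfrak F$. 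I also use freely that products, entry-wise products, and finite $\mathbb Z$-linear combinations of $\mathfrak F$-formulas are again $\mathfrak F$-formulas (stack them via $I_k\otimes(\cdot)$ and multiply by constant block matrices).

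For item~1, the transpose and vectorization identities show that $[F,r,s]\in\digamma$ if and only if $[F,rs,\emptyset]\in\digamma$ (the latter direction using that de-vectorization is an $\mathfrak F$-formula), and more generally that $\digamma$-membership is insensitive to how the root coordinates are distributed between $r$ and $s$ and to their order. It therefore suffices to prove that if $[F,t,\emptyset]\in\digamma$ and $t'$ is a tuple all of whose entries occur in $t$, then $[F,t',\emptyset]\in\digamma$; but any such $t'$ is obtained from $t$ by finitely many of the moves ``permute a coordinate'', ``delete a coordinate'', ``repeat a coordinate'', each an instance of the structural operations above. De-vectorizing then splits $t'$ into the desired $r',s'$. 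Item~2 is then immediate: under $r\cup s=V(F)$ the formula $f_{[F,r,s]}\cdot f_{[F',r,s]}$ lies in $\mathfrak F$ and computes $\kappa([F\cup F',r,s],\cdot)$, so $[F\cup F',r,s]\in\digamma$.

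Item~3 is where the real work lies. The natural candidate is the \emph{operator} (matrix) product $f_{[F,r,s]}\,f_{[F'',s,t]}$, whose $(i,k)$ entry is $\sum_j\kappa([F,r,s],G)_{ij}\,\kappa([F'',s,t],G)_{jk}$. Since $r\cup s=V(F)$ and $s\cup t=V(F'')$, each summand counts the pairs consisting of a copy of $F$ at $(i,j)$ and a copy of $F''$ at $(j,k)$ that agree on $s$; gluing such a pair produces a copy of $F\cup F''$ at $(i,k)$, \emph{except} when the $r$-image and the $t$-image collide, in which case one obtains instead a copy of a proper quotient $(F\cup F'')/\mu$, with $\mu$ a partial matching of the $r$-vertices with the $t$-vertices. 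I would therefore write $\kappa([F\cup F'',r,t],\cdot)$ as $f_{[F,r,s]}\,f_{[F'',s,t]}$ corrected by inclusion--exclusion over these matchings: the term attached to $\mu$ is, up to a sign and up to extracting sub-coordinates of $(i,k)$ (a move from item~1), the connection matrix of $(F\cup F'')/\mu$ rooted at the identified tuples, and it lies in $\digamma$ because that quotient is again a union of (quotients of) $F$ and $F''$, to which items~1 and~2 and Lemma~1 apply. Summing the finitely many resulting $\mathfrak F$-formulas with integer coefficients yields an $\mathfrak F$-formula for $\kappa([F\cup F'',r,t],\cdot)$, so $[F\cup F'',r,t]\in\digamma$.

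\textbf{Main obstacle.} Item~3 is the hard part: matrix multiplication glues $F$ to $F''$ along $s$ only at the level of (not necessarily injective) homomorphic images, so injectivity must be restored by an inclusion--exclusion, and the bookkeeping---enumerating the quotients $(F\cup F'')/\mu$, pinning down the tuples they are rooted at, and checking that each lands in $\digamma$ via items~1--2---is what requires care. The structural closures underlying item~1 (marginalization, coordinate permutation, diagonal embedding, de-vectorization) are routine, but should be recorded explicitly, presumably as part of Lemma~1.
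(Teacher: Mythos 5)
Your item~1 follows essentially the paper's route (coordinate manipulations supplied by Lemma~1), and your item~3 is built on the same core idea as the paper's (the operator product glues two copies along the common root $s$). The genuine gap is in item~2, which is circular as written: the identity $f_{[F,r,s]}\cdot f_{[F',r,s]}=f_{[F\cup F',r,s]}$ that you take from the display preceding the Proposition \emph{is} item~2 --- the paper only establishes it in the appendix, as part of this very proof. The substantive content you are missing is the observation that when $r\cup s=V(F)$ every entry of $\kappa(F,G)$ lies in $\{0,1\}$: a copy of $F$ at $(i,j)$ is determined by where its roots go, and the roots exhaust $V(F)$, so there is at most one copy per location. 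Only then does the entrywise product of two indicator-valued matrices count copies of the union of the two uniquely determined copies. Without that uniqueness argument item~2 is asserted, not proved.

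On item~3, your worry about collisions is well founded --- in fact the paper's own proof silently assumes that every pair (copy of $F$ at $(i,w)$, copy of $F''$ at $(w,j)$) glues to an \emph{injective} copy of $F\cup F''$ at $(i,j)$, which fails precisely on entries where $i$ and $j$ collide outside $w$ (compare $(A^2)_{ii}=\deg(i)$ with $\kappa([P_3,1,3],G)_{ii}=0$); the paper only introduces a multiplicity constant $c_{F,F''}$ and no collision correction. Your inclusion--exclusion over partial matchings $\mu$ is a legitimate repair in principle, but as written it is incomplete: the claim that $\kappa\big((F\cup F'')/\mu,\cdot\big)\in\digamma$ is itself an instance of item~3 for a smaller union (the quotient is $F$ glued to a relabelled $F''$ along an enlarged common root), so you need an explicit induction on $|V(F\cup F'')|$ to avoid circularity. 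A lighter fix is available: for $(i,j)$ with no collision the gluing argument is sound, and for $(i,j)$ with a collision the true count is $0$; hence it suffices to multiply $f_F f_{F''}$ entrywise by the \emph{constant} $0/1$ matrix $\big(\mathbb{I}\{i_a\neq j_b\ \text{whenever}\ r_a\neq t_b\}\big)_{ij}$, which is admissible in $\mathfrak{F}$ and sidesteps the quotient bookkeeping entirely.
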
\vspace{-.3\baselineskip}

In the proof of Theorem~\ref{thm1} we use Proposition~\ref{hp} to iteratively build formulas for larger and larger $F$, ultimately allowing us to show that any rooted graph $F$ is in $\digamma$. Specifically, the proof proceeds by recurrence on the order of $F$, paralleling the proofs of~\cite{chiba1985arboricity,meeks2016tree,nevsetvril2012sparsity,kowaluk2013counting,demaine2014structural}, where the induction variable is instead the independent set or the tree-width.

We initialize with order 2 graphs. First, observe that with $K_2:=(\{1,2\},\{12\})$, we have $A(G)=\kappa([K_2,1,2],G)$ and therefore $[K_2,1,2]\in\digamma$ with $f_{[K_2,1,2]}$ being the identity. Then, by Proposition~\ref{hp}.1, $[K_2,1,2]\in\digamma$ implies that all other order two rooted graphs are also in $\digamma$, and the initialization is complete. 

For the induction step, we assume that all order at most $k$ rooted graphs are in $\digamma$, and fix a rooted graph $F=[F,r,s]$ of order $k+1$. To show that $F$ is in $\digamma$ we select one of its vertices, say $t$, and use Proposition~\ref{hp} to successively:
\vspace{-.3\baselineskip}
\begin{itemize}\itemsep0em
\item[--] Set $F'$ the subgraph of $F$ obtained by removing all the edges attached to $t$, but having the same vertex set as $F$, and show that $[F',V(F),t]\in\digamma$.\vspace{-.2\baselineskip}
\item[--] Set $F''$ the subgraph of $F$ obtained by keeping only all the edges attached to $t$, but having the same vertex set as $F$, and show that $[F'',V(F),t]\in\digamma$.\vspace{-.2\baselineskip}
\item[--] Show that $[F'\cup F'',V(F),t]$ is in $\digamma$, which yields that $F$ is in $\digamma$.
\end{itemize}\vspace{-.3\baselineskip}
We present a complete proof in the Appendix.

Theorem~\ref{thm1} yields formulas to compute the connection matrix of any rooted graph. However, we can control the complexity of evaluating such formulas by choosing the operations involved carefully. We find that using operations on the non-backtracking matrix~\cite{Krzakala13} $B(\cdot) = \kappa(P^{\ddumbbell}_3,\cdot)$, where $P^{\ddumbbell}_3 = [(\{1,2,3\}, \{12,23\}),$ $12, 23]=\EqPathThreeRR$, we have:
\vspace{-.75\baselineskip}
\paragraph{Theorem~\protect\ref{thm2}*}\emph{Call edge-rooted graph a rooted graph having for root a pair of connected vertices and the empty set. Then, with $\mathcal{F}^{\dumbbell}_{\leq6}$ the set of order at most 6 edge-rooted simple graphs, we have}
\[
\forall F\in\mathcal{F}^{\dumbbell}_{\leq6},\ \exists f_F'\in \mathfrak{F},\ s.t.\ \forall G,\ \kappa(F,G) = f_F'(B(G)).
\]
Our proof consist in presenting one such formula for each $F\in\mathcal{F}^{\dumbbell}_{\leq6}$. These can be found in the Appendix. Then, scrutiny of the formulas shows first that none involve Kronecker products or vectorization, and then that they at most involve the operator product of 4 matrices at least as sparse as $B(G)$. In the Appendix we also present the formula to compute the total 9-cycle count using $B(G)$, which involves the 6th power of $B$. Together, these yield the theorem as stated in the introduction.

To give an intuition as to how these formulas were produced, we present as an example how to build the formula for the edge rooted 6-clique $K_6^{\dumbbell}$. In the following, to simplify notation, we drop the dependence in the underlying graph $G$, so that $B = B(G)$ and $\kappa(F) = \kappa(F,G)$. We also write $[F,r,s]^\top = [F,s,r]$. Then, as long as the union of the roots is the vertex set, using~\eqref{trans} and~\eqref{hada} we have $\kappa(F^\top) = \kappa(F)^\top$ and if furthermore $V(F)=V(F')$, $\kappa(F)\!\cdot\!\kappa(F')=\kappa(F\cup F')$.

To build $\kappa(K_6^{\dumbbell})$, our first step is to evaluate $\kappa(C_4^{\ddumbbell})$, where
\[
C_4^{\ddumbbell}
        = [\{1,2,3,4\},\{12,23,34,41\},12,34]
        = [C_4,12,34]
        = \EqSquareRRd\!.
\]
Then, we will iteratively use identities of the type of~\eqref{hada} to build $K_6^{\dumbbell}$ from several instances of $C_4^{\ddumbbell}$. To do so, we first consider $B^2$. Writing
\begin{align*}\begin{array}{rlcc}
C_3^\ddumbbell&
        = [\{1,2,3\},\{12,23,31\},12,31]
         &=& \EqTriangleRRd,\\
P_4^\ddumbbell&
        =[\{1,2,3,4\},\{12,23,34\},12,34]
        &=&\EqPathFourRRd,
\end{array}\end{align*}
direct computation shows that $B^2 = \kappa(C_3^{\ddumbbell})+\kappa(P_4^{\ddumbbell})$. We now recover $\kappa(C_4^\ddumbbell)$ using the entry-wise product with $B^{2\top}$. Indeed, using~\eqref{trans} and~\eqref{hada}, we have that
\begin{align*}
&B^2\!\cdot\!B^{2\top}
=\big(\kappa(C_3^{\ddumbbell})\!+\!\kappa(P_4^{\ddumbbell})\big)\!\cdot\!\big(\kappa(C_3^{\ddumbbell})\!+\!\kappa(P_4^{\ddumbbell})\big)^\top\\
&\stackrel{\text{Eq.~\eqref{trans}}}{=} 
\kappa(C_3^{\ddumbbell})\!\cdot\!\kappa(C_3^{\ddumbbell\top})\!+\!
\kappa(C_3^{\ddumbbell})\!\cdot\!\kappa(P_4^{\ddumbbell\top})\!+\!
\kappa(P_4^{\ddumbbell})\!\cdot\!\kappa(C_3^{\ddumbbell\top})\!+\!
\kappa(P_4^{\ddumbbell})\!\cdot\!\kappa(P_4^{\ddumbbell\top})\\
&\stackrel{\text{Eq.~\eqref{hada}}}{=} 
\kappa(P_4^{\ddumbbell}\cup P_4^{\ddumbbell\top})
= \kappa(C_4^\ddumbbell),
\end{align*}
observing that the first three terms of the second line are equal to 0 because the structure of the root sets do not match. Set $\bar C_4^\ddumbbell = [C_4,12,43]$, and notice that we can compute $\kappa(\bar C_4^\ddumbbell)$ through a simple reordering of the columns of $\kappa(C_4^\ddumbbell)$. Then, we have $\kappa(C_4^\ddumbbell)\cdot \kappa(\bar C_4^\ddumbbell) = \kappa(C_4^\ddumbbell\cup \bar C_4^\ddumbbell) = \kappa(K_4^{\ddumbbell})$, where
\[K_4^{\ddumbbell}
        = (\{1,2,3,4\},\{12,13,14,23,24,34\},12,34)
        = \EqCompleteRR.
\]
To conclude, we use that a 6-clique can be partitioned into three 4-cliques, each pair of which share an edge disconnected from the third. First we observe that for each pair of edges, the number of copies of $H$---where $H^{\ddumbbell}$ is a pair of 4-cliques connected by an edge---verifies (as $G$ is simple, $xy\in E(G)\Rightarrow yx\in E(G)$)
\[
\big(\kappa(K_4^{\ddumbbell}) \kappa(K_4^{\ddumbbell})\big)_{ij,kl} = 2\sum_{x<y}\kappa(K_4^{\ddumbbell})_{ij,xy}\kappa(K_4^{\ddumbbell})_{xy,kl} = 2 \kappa(H)_{ij,kl},
\vspace{-.333\baselineskip}
\]
with $H^{\ddumbbell}\!=\![\{1,2,...,6\},\{pq: 1\leq p<q\leq 4\}\cup\{pq:3\leq p<q\leq 6\},12,56]$. Then, by~\eqref{hada}, we have that
\[
\kappa(H^{\ddumbbell})\cdot\kappa(K_4^{\ddumbbell}) = \kappa(H^{\ddumbbell}\cup K_4^{\ddumbbell})= \kappa(K_6^{\ddumbbell}),
\]
where $K_6^{\ddumbbell} = (\{1,2,...,6\},\{pq:1\leq p<q\leq 6\},12,56)$. To summarize, we obtained that with ${\bf 1}$ the vector of all 1 in the appropriate dimension, \[
\kappa(K_6^{\dumbbell})=
\big((\kappa(K_4^{\ddumbbell}) \kappa(K_4^{\ddumbbell}))\cdot \kappa(K_4^{\ddumbbell})\big) {\bf 1}/4!,
\]
since there are $4!$ partitions of a 6-clique into three 4-cliques (one for every pair of disconnected directed edges other than the root).

\section{Conclusion}\label{Concl}
The underlying heuristic of powerful methods to count small rooted subgraphs is to enumerate copies in the neighborhood of each vertex or edge~\cite{Przulj2007,solava2012graphlet,Ali2014alignment,hocevar2014}. However, the power-law degrees of real-world graphs makes the average distance between vertices so small~\cite{watts1998small}, that at least one vertex neighborhood is very likely to span a positive fraction of the whole graph~\cite{chatterjee2009}. Therefore, the neighborhood approach ceases to become efficient to count medium-sized subgraphs.

Unfortunately, as is becoming increasingly obvious when considering for instance applications in biology, counting larger rooted subgraphs is needed to produce finer classifications of network data~\cite{solava2012graphlet,Ali2014alignment}. Furthermore, the need for counts of larger subgraphs aligns with results from random graph theory and statistics: for a general class of random graph models (termed inhomogeneous random graphs), larger subgraphs present smaller variances, and therefore are more powerful statistics to classify graphs~\cite{BickelLevina2012}. This suggests that counts of larger and larger subgraphs will be needed, and that a general understanding of medium-sized subgraph counting must be achieved.

To address this issue, and count medium-sized subgraphs efficiently, we first generalize it by introducing {\em connection matrices}. Connection matrices generalize most operators used alongside graphs: the adjacency matrix, the incidence matrix, the non-backtracking matrix, etc. For a rooted graph $F$ and a graph $G$, the connection matrix $\kappa(F,G)$ counts the number of copies of $F$ which connects any pair of possible locations for the roots of $F$ in $G$.

Then, we prove the surprising result that despite their generality, any connection matrices can be computed using basic matrix operations on the adjacency matrix. More precisely, with $A(\cdot)$ the map that associates to a graph its adjacency matrix, we show that the span of $A(\cdot)$ under classical matrix operations (operator, entry-wise and Kronecker products, transpose and vectorization) contains all connection matrices; i.e., $A(\cdot)$ is a generator of the set of connection matrices under classical matrix operations.

The surprising part of this result is not that all rooted counts can be retrieved from the adjacency matrix---this is expected. Rather, it is that in all generality, basic matrix operations are sufficient to count copies of any rooted graph in any graph. It proves that the algorithmic challenges raised by the subgraph counting problem are subsumed by the ones raised by the product or large matrices.

Finally, this reduction of the subgraph counting problems yields powerful algorithms. Indeed, the key advantage of using matrix products over neighborhood search is that taking one additional matrix product does not significantly increase the complexity. We leverage this by providing algorithms recovering the complexity, or indeed improving on, best known methods to count medium-sized rooted graphs. These algorithms substantiate our claim that using $A(\cdot)$ and classical matrix operations to compute any connection matrix becomes algorithmically efficient for large enough subgraphs.

\section{Appendix}
We use the following notations:
\vspace{-.3333\baselineskip}
\begin{itemize}[leftmargin= *]
\item[--] For two matrices $A$ and $B$, we write: $AB$ for the product of $A$ and $B$, $A\cdot B$ for the entry-wise (or Hadamard) product, $A\times B$ for the Kronecker product, $\vect (A)$ for the vectorization of $A$ and $A^\top$ for the transpose of $A$;
\vspace{-.5\baselineskip}
\item[--]  We write $[k] = \{1,2,\dots,k\}$ and for a set $X$ we write $|X|$ for the cardinality of $X$ and $X^k$ for the set of ordered $k$-tuples of elements of $X$. For a tuple $r$ we write $|r|$ for the length of $r$;
\vspace{-.5\baselineskip}
\item[--]  We denote ${\bf 1}_k$ the column vector of all 1 of dimension $k$.
\end{itemize}
\subsection{Proof of Theorem~\protect\ref{thm1}}
Before we proceed to prove Proposition~\ref{hp}, we introduce the following definition and lemma.
\begin{Definition}[Sub-tuple]
Set $r=r_1\cdots r_{|r|}$ an ordered tuple and $l\leq |r|$. Then, for each $k=k_1\cdots k_l\in [|r|]^l$ without repetition---i.e., such that $i\neq j\Rightarrow k_i\neq k_j$---we write $r_{|k}= r_{k_1}\cdots r_{k_l}$.
\end{Definition}
\begin{Lemma}
\label{lem}
Fix a graph $G$ of order $n$ and a rooted graph $F=[F,r,s]$. Then:
\vspace{-.5\baselineskip}
\begin{align*}
&\text{In all cases,}&&
\kappa([F,s,r],G)=\kappa(F,G)^\top,\ 
\kappa([F,rs,\emptyset],G)=\vect(\kappa(F,G)),\\
&\text{If $t= r_{|k}$,}&&
\kappa([F,t,s],G)=(\mathbb{I}\{i=j_{|k}\})_{i\in[n]^{|t|},j\in[n]^{|r|}}\kappa(F,G),\qquad\\
&\text{If $s=r_{|k}$,}&&
\kappa(F,G)=(\mathbb{I}\{i=j_{|k}\})_{i\in[n]^{|r|},j\in[n]^{|s|}}\cdot ({\bf 1}_{n^{|s|}}^\top\times \kappa([F,r,\emptyset],G)).\qquad
\end{align*}
\end{Lemma}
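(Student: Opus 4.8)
The plan is to verify each of the four identities directly from Definition~\ref{count}, by exhibiting for each a bijection between the families of embedded copies being counted. The common principle is that all four assertions follow from re-indexing the set $\{[F',i,j]\subset[G,i,j]:[F',i,j]\equiv[F,r,s]\}$, either by relabelling the roots of $F$ (which changes how the index pair $(i,j)$ is read off from an embedding) or by summing over a root that is free to range over all tuples.

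First I would dispatch the two universal identities. For the transpose, note that an embedding $\phi:V(F)\to V(F')\subset V(G)$ certifying $[F',i,j]\equiv[F,r,s]$ is \emph{exactly} one certifying $[F',j,i]\equiv[F,s,r]$, since the only change is swapping the two root tuples; hence $\kappa([F,s,r],G)_{ji}=\kappa(F,G)_{ij}$, which is the statement $\kappa([F,s,r],G)=\kappa(F,G)^\top$. For the vectorization identity, an embedding mapping $r\mapsto i$ and $s\mapsto j$ is the same datum as an embedding mapping the concatenated tuple $rs\mapsto ij$ (the concatenation of $i$ and $j$), so the $(i,j)$-entry of $\kappa(F,G)$ equals the $(ij)$-entry of $\kappa([F,rs,\emptyset],G)$; matching the convention that $\vect$ stacks entries indexed by pairs into a single column-indexed vector gives $\kappa([F,rs,\emptyset],G)=\vect(\kappa(F,G))$.

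Next I would handle the sub-tuple restriction on the first root. If $t=r_{|k}$, then specifying an embedding with first root mapped to a tuple $i\in[n]^{|t|}$ is the same as specifying an embedding whose original first root maps to some $j\in[n]^{|r|}$ with $j_{|k}=i$, the remaining coordinates of $j$ being unconstrained \emph{as indices} but determined by $\phi$ on the vertices $r$. Concretely, $\kappa([F,t,s],G)_{ij'}=\sum_{j:\,j_{|k}=i}\kappa(F,G)_{jj'}$, which is precisely left-multiplication by the $0/1$ matrix $(\mathbb{I}\{i=j_{|k}\})_{i\in[n]^{|t|},j\in[n]^{|r|}}$; this is the claimed identity. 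Here one must be a little careful that $j_{|k}$ is well-defined as a sub-tuple (so $k$ has no repetitions, as in the definition), and that picking out $j_{|k}=i$ is consistent with the original embedding having root $r$ — but since every embedding counted by $\kappa(F,G)_{jj'}$ is a genuine embedding of $F$, restricting its first root to the sub-tuple $t$ never creates or destroys copies, it only merges index classes.

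The last identity, expressing $\kappa(F,G)$ in terms of $\kappa([F,r,\emptyset],G)$ when $s=r_{|k}$, is the one I expect to be the main obstacle, because it combines a Kronecker product with a Hadamard product and an index-matching matrix, and getting the index bookkeeping to line up is the delicate part. The idea: $\kappa([F,r,\emptyset],G)$ is a column vector indexed by $j\in[n]^{|r|}$ whose $j$-entry is the number of copies of $F$ with root $r\mapsto j$. Forming ${\bf 1}_{n^{|s|}}^\top\times\kappa([F,r,\emptyset],G)$ replicates this vector across all possible values of the second root, producing a matrix indexed by $(j,j')\in[n]^{|r|}\times[n]^{|s|}$ whose $(j,j')$-entry is again that same count (ignoring $j'$). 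Taking the Hadamard product with $(\mathbb{I}\{i=j'_{|k}\})_{i\in[n]^{|r|},j'\in[n]^{|s|}}$ — wait, I must transpose the roles so the index matching is $\mathbb{I}\{i=j_{|k}\}$ as written — zeroes out every entry except those where the putative second-root value $j'$ equals $j_{|k}$, i.e. except those consistent with $s$ being the sub-tuple $r_{|k}$ of the first root. Since for a copy of $F$ with root $r\mapsto j$ the induced image of $s=r_{|k}$ is \emph{forced} to be $j_{|k}$, the surviving entries count exactly the copies of $[F,r,s]$, giving $\kappa(F,G)$. The proof amounts to writing out the $(i,j)$-entry of the right-hand side, observing the Kronecker/Hadamard product collapses to $\mathbb{I}\{j=i_{|k}\}\cdot\kappa([F,r,\emptyset],G)_i$, and matching this to the definitional count of $\kappa(F,G)_{ij}$; I would be careful about which of the two indices carries the sub-tuple constraint and about the standard convention for $\mathbf{1}^\top\times(\cdot)$ so that the dimensions and entry positions agree.
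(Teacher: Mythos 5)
Your proposal is correct and follows essentially the same route as the paper's proof: each identity is verified entrywise from Definition~\ref{count} by exhibiting the bijection between embedding families (root swap for the transpose, root concatenation for the vectorization, summation over tuples $w$ with $w_{|k}=i$ realized as left-multiplication by the $0/1$ indicator matrix, and the forced image of $s=r_{|k}$ realized by the Kronecker/Hadamard construction). Your remark that the fourth identity really pins down $j=i_{|k}$ rather than the literal $i=j_{|k}$ of the statement is exactly the reading the paper's own computation uses in its final display.
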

\begin{proof}
Recall from Definition~\ref{count} that $[F',j,i]\equiv [F,s,r]$ if there exists a bijective map $\phi:V(F)\to V(F')$ such that $\phi(r)=\phi(i)$, $\phi(s)=\phi(j)$ and $\phi(u)\phi(v)\in E(F') \Leftrightarrow uv\in E(F)$. In the following, we need to repeatedly manipulate $\phi$. Therefore, to simplify notation, we will write within this proof $[F',j,i]\equiv_\phi [F,s,r]$. We now prove each claim in succession. 
\vspace{-.3333\baselineskip}
\begin{itemize}[leftmargin= *]
\item[--] We show that $\kappa([F,s,r],G)=\kappa(F,G)^\top$. Fix $i\in[n]^{|r|}$ and $j\in[n]^{|s|}$. First, we directly observe that $[F',j,i]\equiv_\phi [F,s,r] \Leftrightarrow [F',i,j]\equiv_\phi [F,r,s]$. Now, from Definition~\ref{count}, we therefore have
\begin{align*}
\kappa([F,s,r],G)_{ji} &= \#\left\{[F',j,i]\subset [G,j,i]\,:\,[F',j,i]\equiv [F,s,r]\right\}\\
&= \#\left\{[F',i,j]\subset [G,i,j]\,:\,[F',i,j]\equiv [F,r,s]\right\}\\
& = \kappa([F,s,r],G)_{ij},
\end{align*}
which yields the claim.
\item[--] We show that  $\kappa([F,rs,\emptyset],G)=\vect(\kappa(F,G))$. Fix $i\in[n]^{|r|}$ and $j\in[n]^{|s|}$. First, from Definition~\ref{count}, we directly observe that $[F',i,j]\equiv_\phi [F,r,s] \Leftrightarrow [F',ij,\emptyset]\equiv_\phi [F,rs,\emptyset]$. Now, from Definition~\ref{count}, we therefore have
\begin{align*}
\vect(\kappa(F,G))_{ij}
&= \#\left\{[F',i,j]\subset [G,i,j]\,:\,[F',i,j]\equiv [F,r,s]\right\}\\
&= \#\left\{[F',ij,\emptyset]\subset [G,ij,\emptyset]\,:\,[F',ij,\emptyset]\equiv [F,rs,\emptyset]\right\}\\
&= \kappa([F,rs,\emptyset],G)_{ij},
\end{align*}
which yields the claim.
\item[--] Set $t= r_{|k}$. We show that $\kappa([F,t,s],G)\!=\!(\mathbb{I}\{i=w_{|k}\})_{i\in[n]^{|t|},w\in[n]^{|r|}}\kappa(F,G)$. Fix $i\in[n]^{|t|}$ and $j\in[n]^{|s|}$. There we observe that if $[F',i,j]\equiv_\phi [F,t,s]$, then i) for any tuple $w$ in $V(F)$, $[F',i\phi(w),j]\equiv_\phi [F,tw,s]$; ii) for any tuple $w$ of length at most $|t|$ in $[|t|]$ without repetition, we have that $[F',i_{|w},j]\equiv_\phi [F,t_{|w},s]$. Together, this shows by inclusion exclusion, that
\begin{align*}
\kappa([F,t,s],G)_{ij}
&= \#\big\{[F',i,j]\subset [G,i,j]\,:\,[F',i,j]\equiv [F,t,s]\big\}\\
&=\sum_{w\in[n]^{|r|}: w_{|k}=i}
\#\big\{[F',w,j]\subset [G,w,j]\,:\,[F',w,j]\equiv [F,r,s]\big\}\\
&=\sum_{w\in[n]^{|r|}: w_{|k}=i}\kappa(F,G)_{wj}.
\end{align*}
Therefore,
\begin{align*}
\big((\mathbb{I}\{i=w_{|k}\})_{i\in[n]^{|t|},w\in[n]^{|r|}}&\kappa(F,G)\big)_{ij}
=\sum_{w\in[n]^{|r|}}\mathbb{I}\{i=w_{|k}\}\kappa(F,G)_{wj}\\
&=\sum_{w\in[n]^{|r|}: w_{|k}=i}\kappa(F,G)_{wj}\\
&=\kappa([F,t,s],G)_{ij},
\end{align*}
which yields the claim.
\item[--] Set $s=r_{|k}$. We show that $\kappa(F,G)=(\mathbb{I}\{p=q_{|k}\})_{pq}\cdot({\bf 1}_{n^{|s|}}^\top\times\kappa([F,r,\emptyset],G))$, for $p$ in $[n]^{|r|}$ and $q$ in $[n]^{|s|}$. Fix $i\in[n]^{|r|}$ and $j\in[n]^{|s|}$. First, we observe that,
\[
\big((\mathbb{I}\{p=q_{|k}\})_{pq}\cdot({\bf 1}_{n^{|s|}}^\top\times\kappa([F,r,\emptyset],G))\big)_{ij} = 
\mathbb{I}\{i=j_{|k}\}\kappa([F,r,\emptyset],G)_{i}.
\]
Then, we have the following two implications:
\vspace{-.3333\baselineskip}
\begin{itemize}
\item[i)] If $[F',i,j]\equiv_\phi [F,r,s]$ then $[F',i,\emptyset]\equiv_\phi [F,r,\emptyset]$,
\item[ii)] If $[F',i,\emptyset]\equiv_\phi [F,r,\emptyset]$ then $[F',i,i_{|k}]\equiv_\phi [F,r,r_{|k}]$.
\end{itemize}
\vspace{-.3333\baselineskip}
Therefore, by inclusion exclusion, $\kappa([F,r,\emptyset],G)_{i}=\kappa([F,r,r_{|k}],G)_{ii_{|k}}$, and recalling that $s=r_{|k}$, we obtain
\[
\mathbb{I}\{i=j_{|k}\}\kappa([F,r,\emptyset],G)_{i}
 = \mathbb{I}\{i=j_{|k}\}\kappa([F,r,r_{|k}],G)_{ii_{|k}}
 = \kappa([F,r,s],G)_{ij},
\]
which yields the claim.\qedhere
\end{itemize}
\end{proof}
We can now prove Proposition~\ref{hp}.
\begin{proof}[Proof of Proposition~\protect\ref{hp}]
We prove each item in succession.
\vspace{-.3333\baselineskip}
\begin{enumerate}[leftmargin= *]
\item Fix $r',s'\subset r\cup s$. We show that $[F,r',s']\in\digamma$ using Lemma~\ref{lem}. First, by Lemma~\ref{lem}.1, $[F,rs,\emptyset]\in\digamma$, with $f_{[F,rs,\emptyset]}=\vect(f_F)$. Then, by Lemma~\ref{lem}.3, $[F,rs,s']\in\digamma$ since there exists a $k$ such that $s'=(rs)_{|k}$. Finally, by Lemma~\ref{lem}.2, $[F,r',s']\in\digamma$ since there exists a $k'$ such that $r'=(rs)_{|k'}$.
\item Assume that $r\cup s =V(F)$. Fix a graph $G$ and $i,j$ in $[|G|]^{|r|}$ and $[|G|]^{|s|}$ respectively. We first show that $\kappa(F,G)_{ij}\in\{0,1\}$. Assume that there exist $[F_1,i,j]$ and $[F_2,i,j]$, two copies of $F$ in $G$ at $i,j$. Let $\phi_1$ and $\phi_2$ be the associated adjacency preserving bijections mapping $F$ onto $F_1$ and $F_2$ respectively. Then---following Definition~\ref{count}---for all $p=(r\cup s)_{|k}$, $\phi_1(p)=\phi_2(p)=(r\cup s)_{|k}$ and therefore $\phi_1(p)=\phi_2(p)$ for all $p\in r\cup s=V(F)$, which directly yields that $F_1=F_2$. Hence, each entry of $\kappa(F,G)$ is at most 1.\\
Therefore, if $\kappa(F,G)\neq0$, we can call $F_{ij}$ the copy of $F$ in $G$ at $i,j$, and let $F_{ij}$ be the empty graph otherwise. Then, using that $rs\in V(F)\cap V(F')$, we have that
\begin{align*}
\big(\kappa(F,G)&\cdot\kappa(F',G)\big)_{ij}
=\mathbb{I}\{\kappa(F,G)_{ij}\neq0\}\kappa(F',G)_{ij}\\
&=\mathbb{I}\{\kappa(F,G)_{ij}\neq0\}
\#\big\{[\bar F,i,j]\subset [G,i,j]\,:\,[\bar F,i,j]\equiv [F',r,s]\big\}\\
&=\mathbb{I}\{\kappa(F,G)_{ij}\neq0\}\sum_{[\bar F,i,j]\subset [G,i,j]}\mathbb{I}\{[\bar F,i,j]\equiv [F',r,s]\}.\\
\intertext{Now we use twice the unicity of $F_{ij}$ to simplify the summation as follows:}
&=\sum_{[\bar F,i,j]\subset [G,i,j]}\mathbb{I}\{[\bar F\cup F_{ij},i,j]\equiv [F'\cup F,r,s]\}\\
&=\#\big\{[\hat F,i,j]\subset [G,i,j]\,:\,[\hat F,i,j]\equiv [F'\cup F,r,s]\big\}\\
&=\kappa([F\cup F',r,s],G)_{ij}.
\end{align*}
Therefore $\kappa([F\cup F',r,s],G) = \kappa(F,G)\cdot\kappa(F',G)$, so that $[F\cup F',r,s]\in\digamma$ with $f_{[F\cup F',r,s]} = f_{F}\cdot f_{F'}$.
\item Assume that $r\cup s =V(F)$ and $s\cup t =V(F'')$. Fix a graph $G$ and $i,j$ in $[|G|]^{|r|}$ and $[|G|]^{|t|}$ respectively. Then, as above, we have that each entry of $\kappa(F,G)$ and $\kappa(F'',G)$ are in $\{0,1\}$. Furthermore, 
\begin{align*}
\big(\kappa(F,G)&\kappa(F'',G)\big)_{ij}
= \sum_{w\in[|G|]^{|s|}}\kappa(F,G)_{iw}\kappa(F'',G)_{wj}\\
&= \sum_{w\in[|G|]^{|s|}}
        \sum_{[\bar F,i,w]\subset [G,i,w]}
        \sum_{[\bar F'',w,j]\subset [G,w,j]}
        \mathbb{I}\left\{
                \parbox{.25\textwidth}{\centering
                $[\bar F,i,w]\equiv [F,r,s]$
                \\[.1\baselineskip]
                $[\bar F'',w,j]\equiv [F,s,t]$}\right\}.\\
\intertext{There we observe that for each $[\hat F,iw,j]\equiv [F\cup F'',rs,t]$, there is a finite number of $\bar F, \bar F''$ such that $[\bar F,i,w]\equiv [F,r,s]$ and $[\bar F'',w,j]\equiv [F,s,t]$ and $\hat F = \bar F\cup \bar F''$. This number is independent of $w$ and $G$ and only depends on $F$ and $F''$; we call it $c_{F,F''}$. Resuming, we obtain:}
&= c_{F,F''}\sum_{w\in[|G|]^{|s|}}
\sum_{[\hat F,iw,j]\subset [G,iw,j]}
        \mathbb{I}\left\{[\hat F,iw,j]\equiv [F\cup F'',rs,t]\right\}\\
&= c_{F,F''}\sum_{[\hat F,i,j]\subset [G,i,j]}
        \mathbb{I}\left\{[\hat F,i,j]\equiv [F\cup F'',r,t]\right\}\\
&= c_{F,F''}\kappa([F\cup F'',r,t],G)_{ij}.
\end{align*}
Therefore $\kappa([F\cup F'',r,t],G) = c_{F,F''}^{-1}\kappa(F,G)\kappa(F'',G)$, so that $[F\cup F'',r,t]\in\digamma$ with $f_{[F\cup F'',r,t]} = c_{F,F''}^{-1}f_{F}f_{F''}$.\qedhere
\end{enumerate}
\end{proof}
With Proposition~\ref{hp} now proved, we may now turn to the proof of Theorem~\ref{thm1}.
\begin{proof}[Proof of Theorem~\protect\ref{thm1}]
As explained in the main body of this document, the proof consists in first deconstructing $F$ and then reconstructing it using Proposition~\ref{hp}. Resuming from p4:\\

\noindent (i) -- Let $t$ be any vertex in $V(F)$ and call $F'$ the subgraph of $F$ obtained after removing $t$. Then, for any $s\in V(F)\setminus\{t\}$, both $[F',V(F'),s]$ and $[(\{s,t\},\emptyset),s,t]$ are in $\digamma$ by the induction assumption. Thus, by Proposition~\ref{hp}.2,
\[
[F'\cup (\{s,t\},\emptyset),V(F'),t]=[(V(F),E(F')),V(F'),t]\in\digamma,
\]
and by Proposition~\ref{hp}.1, $[(V(F),E(F')),V(F),t]\in\digamma$.\\

\noindent (ii) -- Now call $F''$ the subgraph of $F$ with the same vertex set of $F$ but containing only the edges connected to $t$. Then, by the induction assumption, both $\bar F'' = [(V(F)\setminus\{s\},E(F'')\setminus\{st,ts\}),V(F)\setminus\{s\},t]$ and $[(\{t,s\},E(F'')\cap \{st,ts\}),t,s]$ are in $\digamma$. Thus, by Proposition~\ref{hp}.2, 
\[
[\bar F''\cup(\{t,s\},E(F'')\cap \{st,ts\}),V(F)\setminus\{s\},s]=[F'',V(F)\setminus\{s\},s]\in\digamma.
\]
Now, by Proposition~\ref{hp}.1, $[F'',V(F),t]\in\digamma$ and by Proposition~\ref{hp}.3,
\[
[(V(F),E(F'))\cup F'',V(F),t]=[F,V(F),t]\in\digamma.
\]
By Proposition~\ref{hp}.3 we then have $F\in\digamma$, which completes the proof.
\end{proof}

\subsection{Proof of Theorem~\protect\ref{thm2}}
We now list the formulas. See Figures~\ref{F4e},~\ref{F4n},~\ref{F5e},~\ref{F5n} and~\ref{FW9} for the rooted subgraphs' names. Code implementing the formulas in {\tt R}~\cite{Rsoftware} is available from the author on request. Each formula was verified to recover the true total count on a range of random graphs of varied sizes. Our benchmark was the LAD graph isomorphism counting algorithm~\cite{solnon2010all} as implemented in the {\tt R} {\tt igraph} package~\cite{igraph}.

We use the following notations:
\begin{itemize}
\item We write $\mathcal{F}_k^\dumbbell$ the set of edge-rooted simple graphs over $k$ vertices. We write $\mathcal{F}_k^\bullet$ the set of vertex-rooted simple graphs over $k$ vertices. 
\item For $e$ an edge in $G$ we write $\bar e$ for the reversed edge; i.e., if $e=ij$,then $\bar e = ji$.
\item For $M$ a matrix indexed by the directed edges of $G$, we write $\overline M$ and $\underline M$ for the columns and row reversal of $M$ respectively; i.e., $\overline M_{ee'} = M_{ e \bar e'}$, $\underline M_{ee'} = M_{\bar e e'}$ and $\underline M = \overline{M^\top}^\top$. Both operations can be performed linearly with the number of edges in $G$.
\item For $M$ a matrix, we write $\binom{M}{k}$ for the matrix of the same size as $M$ such that for each $p,q$, $\binom{M}{k}_{pq} = \binom{M_{pq}}{k}$. 
\item For $x$ a vector indexed by the directed edges of $G$, we write $\gamma(x)$ the vector indexed by the vertices of $G$ such that 
\[
\gamma(x) = \left(\sum_{j:ij\in E(G)} x_{ij}\right)
_{i\in V(G)}.
\]
\item We set $B^\vartriangle = B^2\cdot B^\top$, $B^\sqcap = B^2-B^\vartriangle$, $B^\boxempty = B^2\cdot (B^2)^\top$, $B^\boxtimes = B^\boxempty\cdot\overline{B^\boxempty}$, $B^\arro = B^2\cdot \overline{B^2}$, $B^\pfour = B^3\cdot(1- B^\top)\cdot(1-\overline{B})\cdot(1-\underline{B})$ and $B^{\Join} = B^\arro \underline{B^\vartriangle}$.
\end{itemize}
\paragraph{$\mathcal{F}_3^\dumbbell$:}
\begin{align*}
f_{P_3^\dumbbell}: B&\mapsto B{\bf 1}\\
f_{[(\{1,2,3\},\{12,23\}),21]}: B&\mapsto \underline{B{\bf 1}}\\
f_{C_3^\dumbbell}: B&\mapsto B^\vartriangle{\bf 1}
\end{align*}
\paragraph{$\mathcal{F}_3^\bullet$:}
\begin{align*}
f_{P_3^\bullet}: B&\mapsto \gamma\left(f_{P_3^\dumbbell}(B)\right)\\
f_{[(\{1,2,3\},\{12,23\}),2]}: B&\mapsto \frac{1}{2}\gamma\left(f_{[(\{1,2,3\},\{12,23\}),21]}(B)\right)\\
f_{C_3^\bullet}: B&\mapsto \frac{1}{2}\gamma\left(f_{C_3^\dumbbell}(B)\right)
\end{align*}
\paragraph{$\mathcal{F}_4^\dumbbell$:}
\begin{figure}
\centering
\includegraphics[width=.5\textwidth]{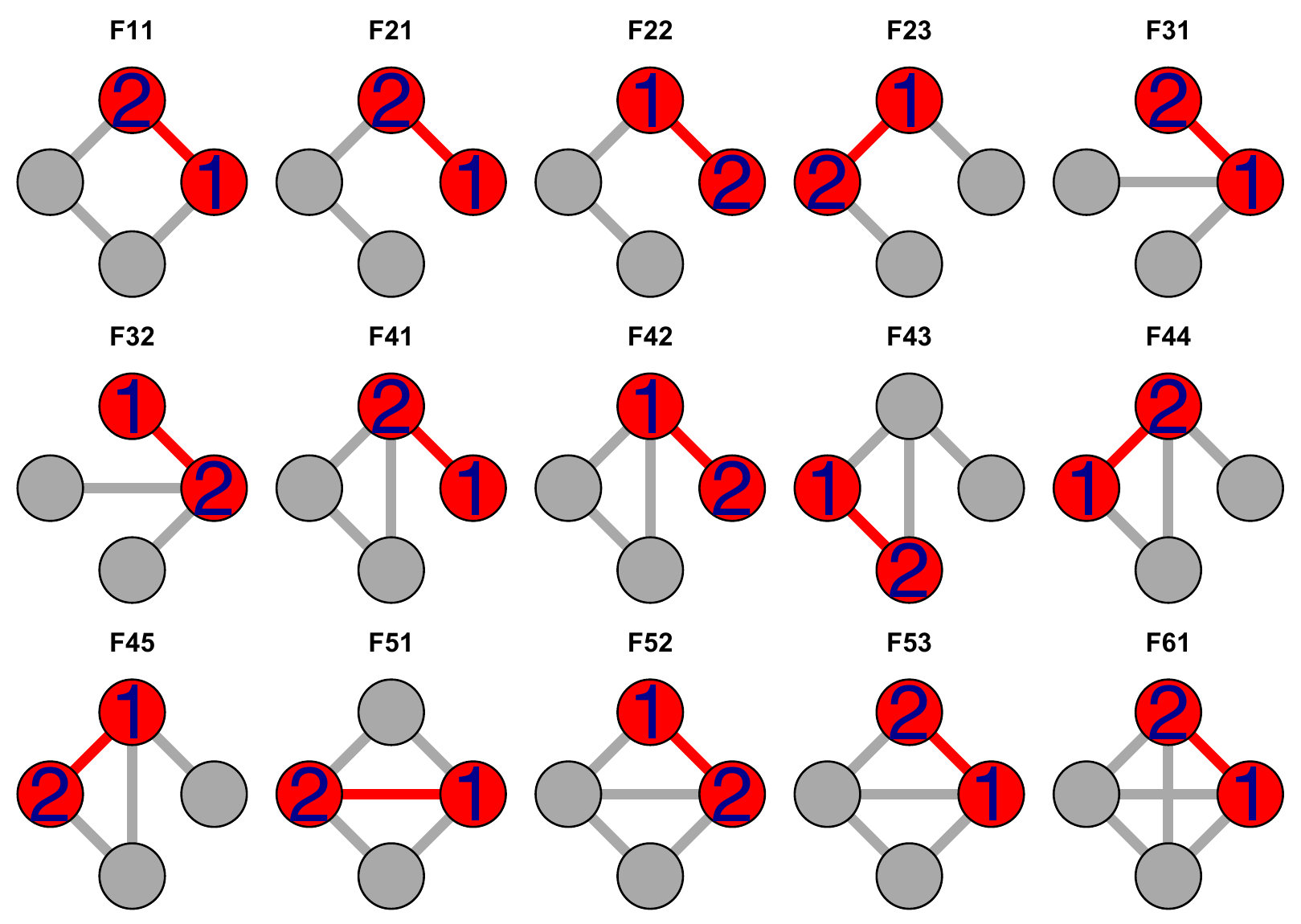}
\caption{\label{F4e} $\mathcal{F}_4^\dumbbell$, where each subgraph is rooted at the colored directed edge $12$. In the following formulas, we add ``$\,^\dumbbell\,$'' as exponent to distinguish with vertex rooted subgraphs.}
\end{figure}
\begin{align*}
f_{X11^\dumbbell}: B&\mapsto B^\boxempty{\bf 1}\\
&\\%
f_{X21^\dumbbell}: B&\mapsto B^\sqcap{\bf 1}\\
f_{X22^\dumbbell}: B&\mapsto \underline{f_{X21}(B)}\\
f_{X23^\dumbbell}: B&\mapsto (B{\bf 1})\cdot ({\bf 1}B)-B^\vartriangle{\bf 1}\\
&\\%
f_{X31^\dumbbell}: B&\mapsto \binom{B{\bf 1}}{2}\\
f_{X32^\dumbbell}: B&\mapsto \underline{f_{X31}(B)}\\
&\\%
f_{X41^\dumbbell}: B&\mapsto B^\arro{\bf 1}/2\\
f_{X42^\dumbbell}: B&\mapsto \underline{f_{X41}(B)}\\
f_{X43^\dumbbell}: B&\mapsto {\bf 1}B^\arro\\
f_{X44^\dumbbell}: B&\mapsto (B^\vartriangle){\bf 1}\cdot(B{\bf 1}-1)\\
f_{X45^\dumbbell}: B&\mapsto \underline{f_{X45}(B)}\\
&\\%
f_{X51^\dumbbell}: B&\mapsto \binom{B^\vartriangle{\bf 1}}{2}\\
f_{X52^\dumbbell}: B&\mapsto (B^\boxempty\cdot \overline{B^2}){\bf 1}\\
f_{X53^\dumbbell}: B&\mapsto \underline{f_{X52}(B)}\\
&\\%
f_{X61^\dumbbell}: B&\mapsto B^\boxtimes{\bf 1}/2
\end{align*}
\paragraph{$\mathcal{F}_4^\bullet$:}
\begin{figure}
\centering
\includegraphics[width=.5\textwidth]{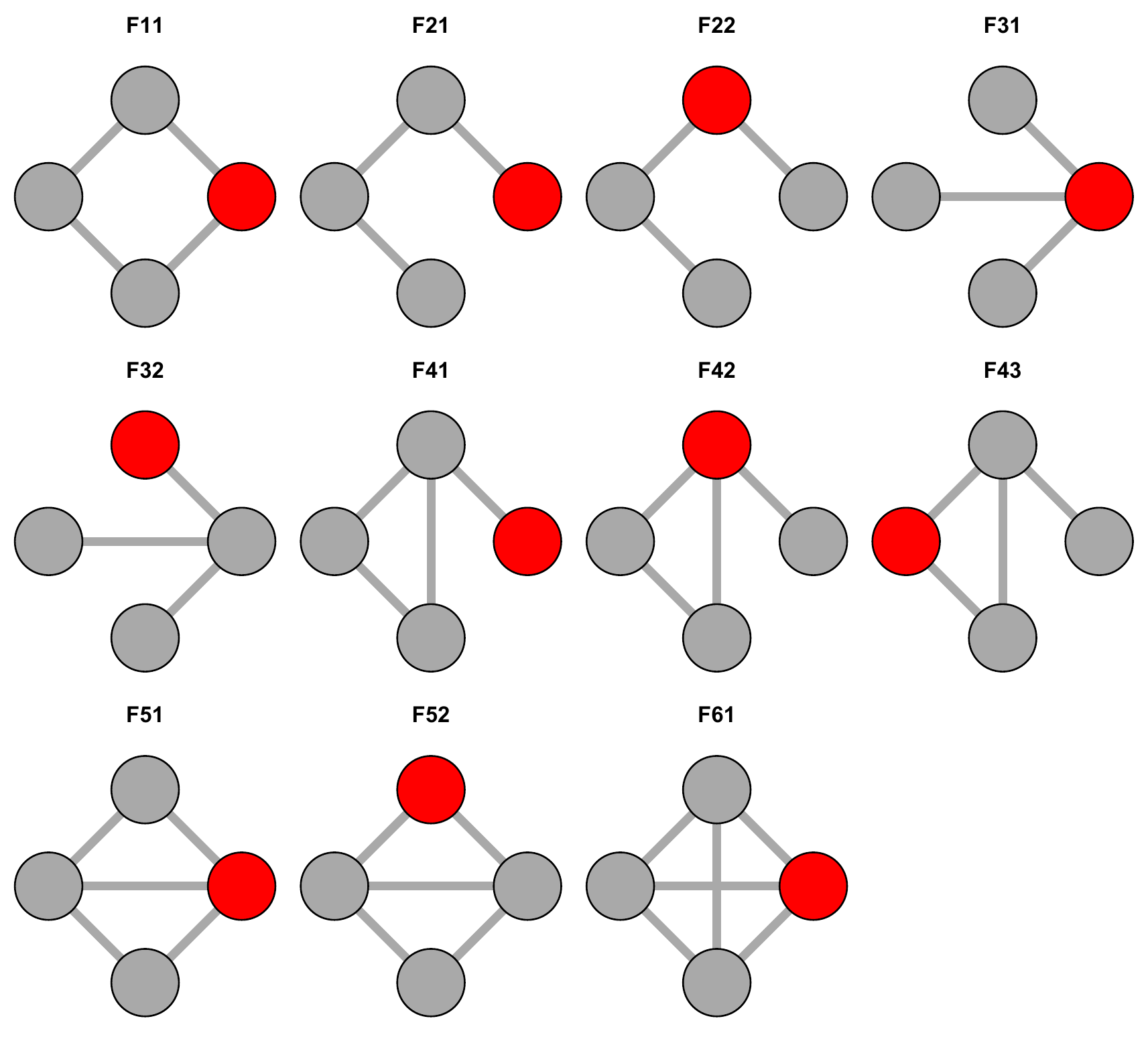}
\caption{\label{F4n} $\mathcal{F}_4^\bullet$, where each subgraph is rooted at the colored vertex. In the following formulas, we add ``$\,^\bullet\,$'' as exponent to distinguish with edge rooted subgraphs.}
\end{figure}
\begin{align*}
f_{X11^\bullet}: B&\mapsto
 \frac{1}{2}\gamma\left(f_{X11^\dumbbell}(B)\right)\\
f_{X21^\bullet}: B&\mapsto
\gamma\left(f_{X21^\dumbbell}(B)\right)\\
f_{X22^\bullet}: B&\mapsto
\gamma\left(f_{X22^\dumbbell}(B)\right)\\
f_{X31^\bullet}: B&\mapsto
 \frac{1}{3}\gamma\left(f_{X31^\dumbbell}(B)\right)\\
f_{X32^\bullet}: B&\mapsto
\gamma\left(f_{X32^\dumbbell}(B)\right)\\
f_{X41^\bullet}: B&\mapsto
\gamma\left(f_{X41^\dumbbell}(B)\right)\\
f_{X42^\bullet}: B&\mapsto
\gamma\left(f_{X42^\dumbbell}(B)\right)\\
f_{X43^\bullet}: B&\mapsto
\gamma\left(f_{X43^\dumbbell}(B)\right)\\
f_{X51^\bullet}: B&\mapsto
\gamma\left(f_{X51^\dumbbell}(B)\right)\\
f_{X52^\bullet}: B&\mapsto
\frac{1}{2}\gamma\left(f_{X52^\dumbbell}(B)\right)\\
f_{X61^\bullet}: B&\mapsto
\frac{1}{3}\gamma\left(f_{X61^\dumbbell}(B)\right)
\end{align*}
\paragraph{$\mathcal{F}_5^\dumbbell$:}
\begin{figure}
\centering
\includegraphics[width=\textwidth]{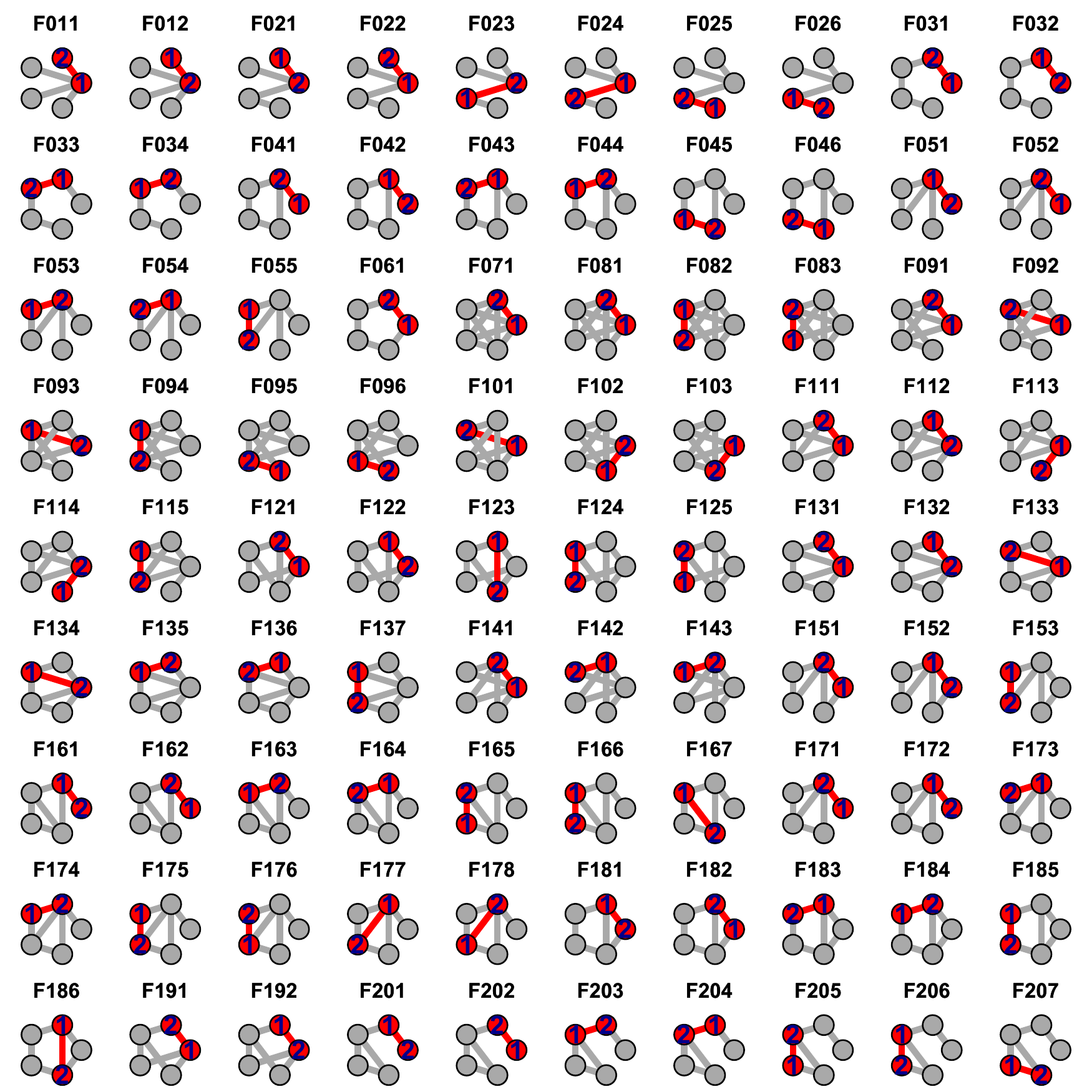}
\caption{\label{F5e} $\mathcal{F}_5^\dumbbell$, where each subgraph is rooted at the colored and directed edge $12$. In the following formulas, we add ``$\,^\dumbbell\,$'' as exponent to distinguish with vertex rooted subgraphs.}
\end{figure}
\begin{align*}
f_{X011^\dumbbell}: B&\mapsto \binom{\underline{B{\bf 1}}}{3}\\
f_{X012^\dumbbell}: B&\mapsto \underline{f_{X011}(B)}\\
&\\%
f_{X021^\dumbbell}: B&\mapsto (B^\sqcap{\bf 1})\cdot (B{\bf 1}-1)-B^\arro{\bf 1}\\
f_{X022^\dumbbell}: B&\mapsto \underline{f_{X021}(B)}\\
f_{X023^\dumbbell}: B&\mapsto \binom{B{\bf 1}}{2}\cdot (\underline{B{\bf 1}}) - (B^\vartriangle{\bf 1})\cdot (B{\bf 1}-1)\\
f_{X024^\dumbbell}: B&\mapsto \underline{f_{X023}(B)}\\
f_{X025^\dumbbell}: B&\mapsto B\binom{B{\bf 1}}{2}-B^\arro{\bf 1}\\
f_{X026^\dumbbell}: B&\mapsto \underline{f_{X025}(B)}\\
&\\%
f_{X031^\dumbbell}: B&\mapsto B^\pfour{\bf 1}\\
f_{X032^\dumbbell}: B&\mapsto \underline{f_{X031}(B)}\\
f_{X033^\dumbbell}: B&\mapsto (B^\sqcap{\bf 1})\cdot \underline{B{\bf 1}}-B^\boxempty{\bf 1}-B^\arro{\bf 1}\\
f_{X034^\dumbbell}: B&\mapsto \underline{f_{X033}(B)}\\
&\\%
f_{X041^\dumbbell}: B&\mapsto (B^\pfour\cdot\overline{B^2}){\bf 1}/2\\
f_{X042^\dumbbell}: B&\mapsto \underline{f_{X041}(B)}\\
f_{X043^\dumbbell}: B&\mapsto (B^\boxempty{\bf 1})\cdot (\underline{B{\bf 1}}-1)-(B^\boxempty\cdot\underline{B^2}){\bf 1}\\
f_{X044^\dumbbell}: B&\mapsto \underline{f_{X043}(B)}\\
f_{X045^\dumbbell}: B&\mapsto {\bf 1}(B^\pfour\cdot\overline{B^2})\\
f_{X046^\dumbbell}: B&\mapsto \underline{f_{X045}(B)}\\
&\\%
f_{X051^\dumbbell}: B&\mapsto ({\bf 1}B^\arro)\cdot (B{\bf 1}-2)/2\\
f_{X052^\dumbbell}: B&\mapsto \underline{f_{X052}(B)}\\
f_{X053^\dumbbell}: B&\mapsto (B^\vartriangle{\bf 1})\cdot\binom{B{\bf 1}-1}{2}\\
f_{X054^\dumbbell}: B&\mapsto \underline{f_{X053}(B)}\\
f_{X055^\dumbbell}: B&\mapsto \big(B^\arro\underline{B}-2B^\arro\big){\bf 1}/2\\
&\\%
f_{X061^\dumbbell}: B&\mapsto \big(B^3\cdot B^{2\top}\big){\bf 1}\\
&\\%
f_{X071^\dumbbell}: B&\mapsto \big(B^\boxtimes\cdot B^{\Join}\big){\bf 1}/6\\
&\\%
f_{X081^\dumbbell}: B&\mapsto \big(B^\boxtimes\cdot (B^\arro B)\big){\bf 1}/2\\
f_{X082^\dumbbell}: B&\mapsto {\bf 1}\big(B^\boxtimes\cdot (B^\arro B)\big)/2\\
f_{X083^\dumbbell}: B&\mapsto \underline{f_{X082}(B)}\\
&\\%
f_{X091^\dumbbell}: B&\mapsto \big(B^\boxtimes B^\vartriangle-2B^\boxtimes\big){\bf 1}/2\\
f_{X092^\dumbbell}: B&\mapsto \big((B^\boxtimes B^\boxempty)\cdot B \big){\bf 1}\\
f_{X093^\dumbbell}: B&\mapsto \underline{f_{X092}(B)}\\
f_{X094^\dumbbell}: B&\mapsto (B^\boxtimes{\bf 1})\cdot (B^\vartriangle{\bf 1}-2)/2\\
f_{X095^\dumbbell}: B&\mapsto \big({\bf 1}(B^\boxtimes B^\vartriangle)-2B^\boxtimes{\bf 1}\big)/2\\
f_{X096^\dumbbell}: B&\mapsto \underline{f_{X095}(B)}\\
&\\%
f_{X101^\dumbbell}: B&\mapsto \big(B^\boxempty\cdot B^{\Join}\big){\bf 1}\\
f_{X102^\dumbbell}: B&\mapsto \binom{(B^\vartriangle \underline{B^\vartriangle})\cdot B^\top}{2}{\bf 1}\\
f_{X103^\dumbbell}: B&\mapsto \underline{f_{X102}(B)}\\
&\\%
f_{X111^\dumbbell}: B&\mapsto (B^\boxtimes{\bf 1})\cdot (\underline{B{\bf1}}-2)/2\\
f_{X112^\dumbbell}: B&\mapsto \underline{f_{X111}(B)}\\
f_{X113^\dumbbell}: B&\mapsto {\bf 1}(B^\boxtimes B - 2B^\boxtimes)/6\\
f_{X114^\dumbbell}: B&\mapsto \underline{f_{X113}(B)}\\
f_{X115^\dumbbell}: B&\mapsto (B^\boxtimes B){\bf 1} - 2{\bf 1}B^\boxtimes\\
&\\%
f_{X121^\dumbbell}: B&\mapsto {\bf 1}\big((B^3-1)\cdot(B^\vartriangle\underline{B^\vartriangle})\cdot B^\top\big)\\
f_{X122^\dumbbell}: B&\mapsto \underline{f_{X121}(B)}\\
f_{X123^\dumbbell}: B&\mapsto {\bf 1}\big((B^\boxempty B^\vartriangle)\cdot B^2\big)/2\\
f_{X124^\dumbbell}: B&\mapsto {\bf 1}(B^\boxempty\cdot(B^\arro B))/2\\
f_{X125^\dumbbell}: B&\mapsto \underline{f_{X124}(B)}\\
&\\%
f_{X131^\dumbbell}: B&\mapsto ((B^\boxempty B^\boxempty)\cdot B^\top){\bf 1}\\
f_{X132^\dumbbell}: B&\mapsto \underline{f_{X131}(B)}\\
f_{X133^\dumbbell}: B&\mapsto ((B^\boxempty\cdot\underline{B^2}){\bf 1})\cdot(B^\vartriangle{\bf 1}-1)-B^\boxtimes{\bf 1}\\
f_{X134^\dumbbell}: B&\mapsto \underline{f_{X133}(B)}\\
f_{X135^\dumbbell}: B&\mapsto ((B^\boxempty\cdot\underline{B^2})B^\vartriangle-B^\boxempty\cdot\underline{B^2}-B^\boxtimes){\bf 1}\\
f_{X136^\dumbbell}: B&\mapsto \underline{f_{X135}(B)}\\
f_{X137^\dumbbell}: B&\mapsto ((B^3\cdot B^2\cdot B^\top)(B^\vartriangle-1)-B^\boxtimes){\bf 1}\\
&\\%
f_{X141^\dumbbell}: B&\mapsto \binom{B^\vartriangle{\bf 1}}{3}\\
f_{X142^\dumbbell}: B&\mapsto B^\vartriangle \binom{B^\vartriangle{\bf 1}-1}{2}\\
f_{X143^\dumbbell}: B&\mapsto \underline{f_{X142}(B)}\\
&\\%
f_{X151^\dumbbell}: B&\mapsto {\bf 1}((B^\arro B^\vartriangle)\cdot(1-B)\cdot(1-\underline{B}))/2\\
f_{X152^\dumbbell}: B&\mapsto \underline{f_{X151}(B)}\\
f_{X153^\dumbbell}: B&\mapsto ((B^\arro B^\vartriangle)\cdot(1-B)\cdot(1-\underline{B})){\bf 1}/2\\
&\\%
f_{X161^\dumbbell}: B&\mapsto {\bf 1}((B^\boxempty\cdot\overline{B^2})\underline{B}-B^\boxtimes-B^\boxempty\cdot\overline{B^2})/2\\
f_{X162^\dumbbell}: B&\mapsto \underline{f_{X161}(B)}\\
f_{X163^\dumbbell}: B&\mapsto ((B^\boxempty\cdot\underline{B^2}){\bf 1})\cdot(B{\bf 1}-1)-B^\boxtimes){\bf 1}\\
f_{X164^\dumbbell}: B&\mapsto \underline{f_{X163}(B)}\\
f_{X165^\dumbbell}: B&\mapsto (((B^\boxempty\cdot\overline{B^2})\underline{B})\cdot(1-\overline{B})-B^\boxtimes){\bf 1}\\
f_{X166^\dumbbell}: B&\mapsto \underline{f_{X165}(B)}\\
f_{X167^\dumbbell}: B&\mapsto (B^\vartriangle{\bf 1}-1)\cdot (B^\arro{\bf 1})-B^\boxtimes{\bf 1}\\
&\\%
f_{X171^\dumbbell}: B&\mapsto B \binom{B^\vartriangle{\bf 1}}{2}-(B^\boxempty\cdot\overline{B^2}){\bf 1}\\
f_{X172^\dumbbell}: B&\mapsto \underline{f_{X171}(B)}\\
f_{X173^\dumbbell}: B&\mapsto ((B^\boxempty\cdot\overline{B^2}){\bf 1})\cdot(B{\bf 1}-2)\\
f_{X174^\dumbbell}: B&\mapsto \underline{f_{X173}(B)}\\
f_{X175^\dumbbell}: B&\mapsto (((B^\boxempty\cdot\overline{B^2})B)\cdot(1-\overline{B})\cdot(1-B^\top)){\bf 1}\\
f_{X176^\dumbbell}: B&\mapsto \underline{f_{X175}(B)}\\
f_{X177^\dumbbell}: B&\mapsto \binom{B^\vartriangle{\bf 1}}{2}\cdot(B{\bf 1}-2)\\
f_{X178^\dumbbell}: B&\mapsto \underline{f_{X177}(B)}\\
&\\%
f_{X181^\dumbbell}: B&\mapsto ((B^\vartriangle B^\boxempty)\cdot(1-\overline{B})\cdot(1-B)){\bf 1}\\
f_{X182^\dumbbell}: B&\mapsto \underline{f_{X181}(B)}\\
f_{X183^\dumbbell}: B&\mapsto (((B^3\cdot B^\top)B^\vartriangle)\cdot(1-B)-B^\boxempty\cdot\underline{B^2}){\bf 1}\\
f_{X184^\dumbbell}: B&\mapsto \underline{f_{X183}(B)}\\
f_{X185^\dumbbell}: B&\mapsto {\bf 1}((B^\vartriangle B^\boxempty)\cdot(1-\overline{B})\cdot(1-B))\\
f_{X186^\dumbbell}: B&\mapsto (B^\vartriangle{\bf 1})\cdot (B^\boxempty{\bf 1})-(B^\boxempty\cdot(\overline{B^2}+\underline{B^2})){\bf 1}\\
&\\%
f_{X191^\dumbbell}: B&\mapsto {\bf 1}\binom{B^3\cdot B^\top}{2}\\
f_{X192^\dumbbell}(B) &=\underline{f_{X191}(B)}\\
&\\%
f_{X201^\dumbbell}: B&\mapsto {\bf 1}((B^\arro B)\cdot(1-B)\cdot(1-\underline{B})\cdot(1-\overline{B})\cdot(1-B^\top))/2\\
f_{X202^\dumbbell}: B&\mapsto \underline{f_{X201}(B)}\\
f_{X203^\dumbbell}: B&\mapsto ({\bf 1}B^\arro)\cdot (B{\bf 1}/2)-(B^\boxempty\cdot\underline{B^2}){\bf 1}\\
f_{X204^\dumbbell}: B&\mapsto \underline{f_{X203}(B)}\\
f_{X205^\dumbbell}: B&\mapsto (B^\vartriangle{\bf 1})\cdot (B^\sqcap{\bf 1})-(B^\arro-B^\boxempty\cdot\overline{B^2}){\bf 1}\\
f_{X206^\dumbbell}: B&\mapsto \underline{f_{X205}(B)}\\
f_{X207^\dumbbell}: B&\mapsto ((B^\arro B)\cdot(1-B)\cdot(1-\underline{B})\cdot(1-\overline{B})\cdot(1-B^\top)){\bf 1}
\end{align*}
\paragraph{$\mathcal{F}_5^\bullet$:}
\begin{figure}
\centering
\includegraphics[width=\textwidth]{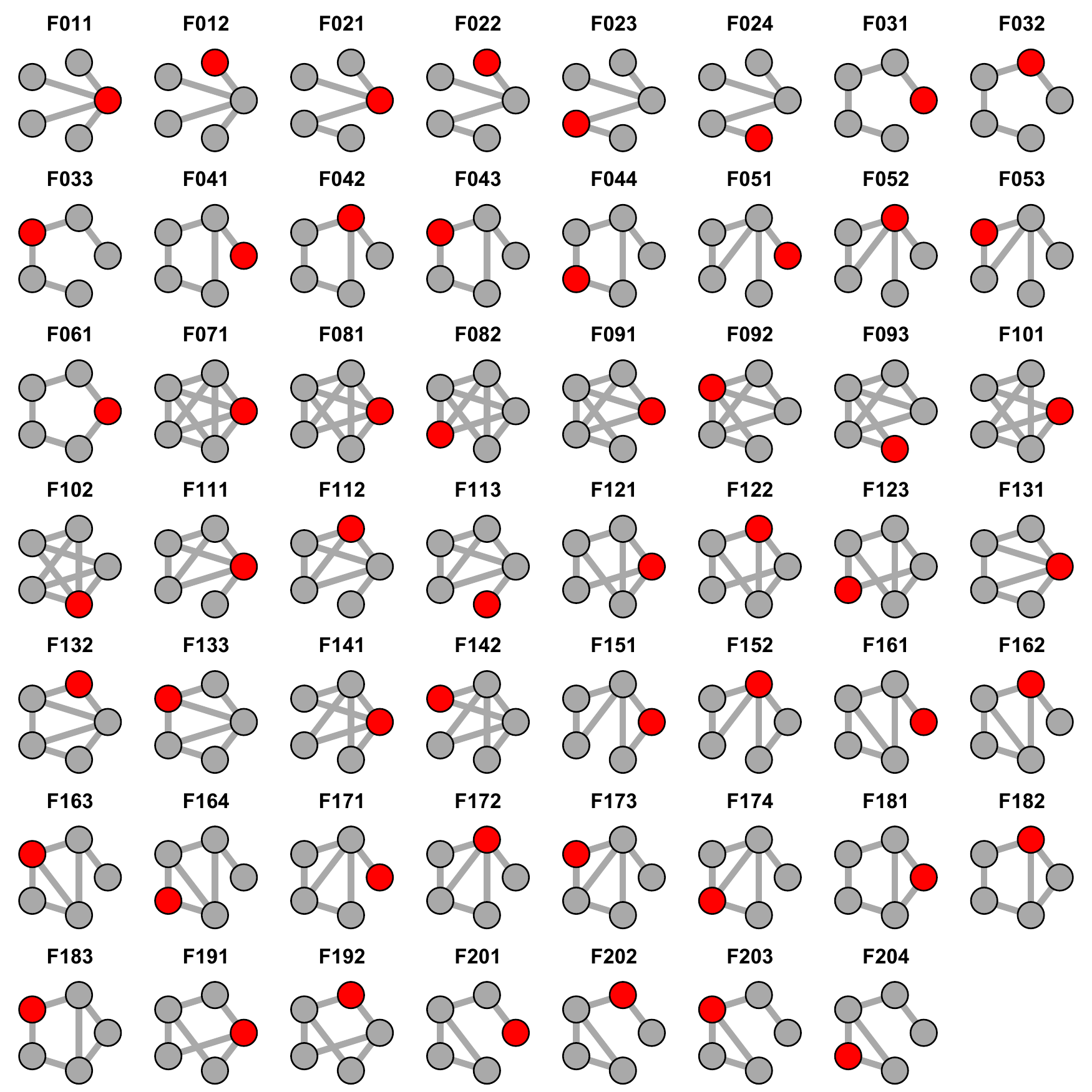}
\caption{\label{F5n} $\mathcal{F}_5^\bullet$, where each subgraph is rooted at the colored vertex. In the following formulas, we add ``$\,^\bullet\,$'' as exponent to distinguish with edge rooted subgraphs.}
\end{figure}
\begin{align*}
f_{X011^\bullet}: B&\mapsto
 \frac{1}{4}\gamma\left(f_{X011^\dumbbell}(B)\right)\\
f_{X012^\bullet}: B&\mapsto
 \gamma\left(f_{X012^\dumbbell}(B)\right)\\&\\
f_{X021^\bullet}: B&\mapsto
 \frac{1}{2}\gamma\left(f_{X022^\dumbbell}(B)\right)\\
f_{X022^\bullet}: B&\mapsto
 \gamma\left(f_{X021^\dumbbell}(B)\right)\\
f_{X023^\bullet}: B&\mapsto
 \gamma\left(f_{X023^\dumbbell}(B)\right)\\
f_{X024^\bullet}: B&\mapsto
 \gamma\left(f_{X025^\dumbbell}(B)\right)\\&\\
f_{X031^\bullet}: B&\mapsto
 \gamma\left(f_{X031^\dumbbell}(B)\right)\\
f_{X032^\bullet}: B&\mapsto
 \gamma\left(f_{X032^\dumbbell}(B)\right)\\
f_{X033^\bullet}: B&\mapsto
 \frac{1}{2}\gamma\left(f_{X034^\dumbbell}(B)\right)\\&\\
f_{X041^\bullet}: B&\mapsto
 \gamma\left(f_{X041^\dumbbell}(B)\right)\\
f_{X042^\bullet}: B&\mapsto
 \frac{1}{2}\gamma\left(f_{X043^\dumbbell}(B)\right)\\
f_{X043^\bullet}: B&\mapsto
 \gamma\left(f_{X044^\dumbbell}(B)\right)\\
f_{X044^\bullet}: B&\mapsto
 \frac{1}{2}\gamma\left(f_{X045^\dumbbell}(B)\right)\\&\\
f_{X051^\bullet}: B&\mapsto
 \gamma\left(f_{X052^\dumbbell}(B)\right)\\
f_{X052^\bullet}: B&\mapsto
 \frac{1}{2}\gamma\left(f_{X051^\dumbbell}(B)\right)\\
f_{X053^\bullet}: B&\mapsto
 \gamma\left(f_{X053^\dumbbell}(B)\right)\\&\\
f_{X061^\bullet}: B&\mapsto
 \frac{1}{2}\gamma\left(f_{X061^\dumbbell}(B)\right)\\&\\
f_{X071^\bullet}: B&\mapsto
 \frac{1}{4}\gamma\left(f_{X071^\dumbbell}(B)\right)\\&\\
f_{X081^\bullet}: B&\mapsto
 \frac{1}{2}\gamma\left(f_{X081^\dumbbell}(B)\right)\\
f_{X082^\bullet}: B&\mapsto
 \frac{1}{3}\gamma\left(f_{X083^\dumbbell}(B)\right)\\&\\
f_{X091^\bullet}: B&\mapsto
 \gamma\left(f_{X091^\dumbbell}(B)\right)\\
f_{X092^\bullet}: B&\mapsto
 \gamma\left(f_{X096^\dumbbell}(B)\right)\\
f_{X093^\bullet}: B&\mapsto
 \frac{1}{2}\gamma\left(f_{X095^\dumbbell}(B)\right)\\&\\
f_{X101^\bullet}: B&\mapsto
 \frac{1}{2}\gamma\left(f_{X101^\dumbbell}(B)\right)\\
f_{X102^\bullet}: B&\mapsto
 \frac{1}{4}\gamma\left(f_{X102^\dumbbell}(B)\right)\\&\\
f_{X111^\bullet}: B&\mapsto
 \frac{1}{3}\gamma\left(f_{X111^\dumbbell}(B)\right)\\
f_{X112^\bullet}: B&\mapsto
 \gamma\left(f_{X112^\dumbbell}(B)\right)\\
f_{X113^\bullet}: B&\mapsto
 \gamma\left(f_{X114^\dumbbell}(B)\right)\\&\\
f_{X121^\bullet}: B&\mapsto
 \frac{1}{2}\gamma\left(f_{X121^\dumbbell}(B)\right)\\
f_{X122^\bullet}: B&\mapsto
 \frac{1}{2}\gamma\left(f_{X122^\dumbbell}(B)\right)\\
f_{X123^\bullet}: B&\mapsto
 \frac{1}{2}\gamma\left(f_{X125^\dumbbell}(B)\right)\\&\\
f_{X131^\bullet}: B&\mapsto
 \frac{1}{2}\gamma\left(f_{X131^\dumbbell}(B)\right)\\
f_{X132^\bullet}: B&\mapsto
 \gamma\left(f_{X132^\dumbbell}(B)\right)\\
f_{X133^\bullet}: B&\mapsto
 \gamma\left(f_{X134^\dumbbell}(B)\right)\\&\\
f_{X141^\bullet}: B&\mapsto
 \gamma\left(f_{X141^\dumbbell}(B)\right)\\
f_{X142^\bullet}: B&\mapsto
 \frac{1}{2}\gamma\left(f_{X143^\dumbbell}(B)\right)\\&\\
f_{X151^\bullet}: B&\mapsto
 \gamma\left(f_{X151^\dumbbell}(B)\right)\\
f_{X152^\bullet}: B&\mapsto
 \frac{1}{4}\gamma\left(f_{X152^\dumbbell}(B)\right)\\&\\
f_{X161^\bullet}: B&\mapsto
 \gamma\left(f_{X162^\dumbbell}(B)\right)\\
f_{X162^\bullet}: B&\mapsto
 \gamma\left(f_{X161^\dumbbell}(B)\right)\\
f_{X163^\bullet}: B&\mapsto
 \gamma\left(f_{X163^\dumbbell}(B)\right)\\
f_{X164^\bullet}: B&\mapsto
 \frac{1}{2}\gamma\left(f_{X165^\dumbbell}(B)\right)\\&\\
f_{X171^\bullet}: B&\mapsto
 \gamma\left(f_{X171^\dumbbell}(B)\right)\\
f_{X172^\bullet}: B&\mapsto
 \gamma\left(f_{X172^\dumbbell}(B)\right)\\
f_{X173^\bullet}: B&\mapsto
 \gamma\left(f_{X175^\dumbbell}(B)\right)\\
f_{X174^\bullet}: B&\mapsto
 \frac{1}{2}\gamma\left(f_{X176^\dumbbell}(B)\right)\\&\\
f_{X181^\bullet}: B&\mapsto
 \frac{1}{2}\gamma\left(f_{X182^\dumbbell}(B)\right)\\
f_{X182^\bullet}: B&\mapsto
 \gamma\left(f_{X181^\dumbbell}(B)\right)\\
f_{X183^\bullet}: B&\mapsto
 \gamma\left(f_{X184^\dumbbell}(B)\right)\\&\\
f_{X191^\bullet}: B&\mapsto
 \frac{1}{3}\gamma\left(f_{X191^\dumbbell}(B)\right)\\
f_{X192^\bullet}: B&\mapsto
 \frac{1}{2}\gamma\left(f_{X192^\dumbbell}(B)\right)\\&\\
f_{X201^\bullet}: B&\mapsto
 \gamma\left(f_{X202^\dumbbell}(B)\right)\\
f_{X202^\bullet}: B&\mapsto
 \gamma\left(f_{X201^\dumbbell}(B)\right)\\
f_{X203^\bullet}: B&\mapsto
 \gamma\left(f_{X203^\dumbbell}(B)\right)\\
f_{X204^\bullet}: B&\mapsto
 \gamma\left(f_{X205^\dumbbell}(B)\right)
\end{align*}
\paragraph{$\mathcal{F}_6^\dumbbell$:}
In this section, we do not list all formulas as above: there are too many of them. Instead, we provide one general formula for all of them. 

Fix $[F,a_1a_2]\in \mathcal{F}_6^\dumbbell$. Let $\{b_1,b_2\},\{c_1,c_2\}$ be a partition of $V(F)\setminus\{a_1,a_2\}$, so that $V(F) = \{a_1,a_2,b_1,b_2,c_1,c_2\}$. Then, let $F_a$ be the subgraph of $F$ induced by $\{a_1,a_2,b_1,b_2\}$; i.e., $F_a = (\{a_1,a_2,b_1,b_2\},V(F)\cap \{a_1,a_2,b_1,b_2\}^2)$. In the same fashion, let $F_b$ be the subgraph of $F$ induced by $\{b_1,b_2,c_1,c_2\}$ and $F_c$ be the subgraph of $F$ induced by $\{c_1,c_2,a_1,a_2\}$.

Then, following Lemma~\ref{lem}, we have that for any simple graph $G$
\[
\kappa(F,G) = \Big(\big(\kappa([F_a,a_1a_2,b_1b_2],G)\kappa([F_b,b_1b_2,c_1c_2],G)\big)\cdot\kappa([F_c,a_1a_2,c_1c_2],G)\Big){\bf 1},
\]
and
\begin{equation}\label{6count}
f_F : B\mapsto \big(f_{[F_a,a_1a_2,b_1b_2]}(B)f_{[F_b,b_1b_2,c_1c_2]}(B)\big)\cdot f_{[F_c,a_1a_2,c_1c_2]}(B).
\end{equation}
Since $F_a, F_b$ and $F_c$ are order 4 rooted graphs, we expressed their connection matrix above, and~\eqref{6count} yields a formula for any edge-rooted order six simple graph.
\paragraph{$\mathcal{W}_9$:}
\begin{figure}
\centering
\includegraphics[width=.9\textwidth]{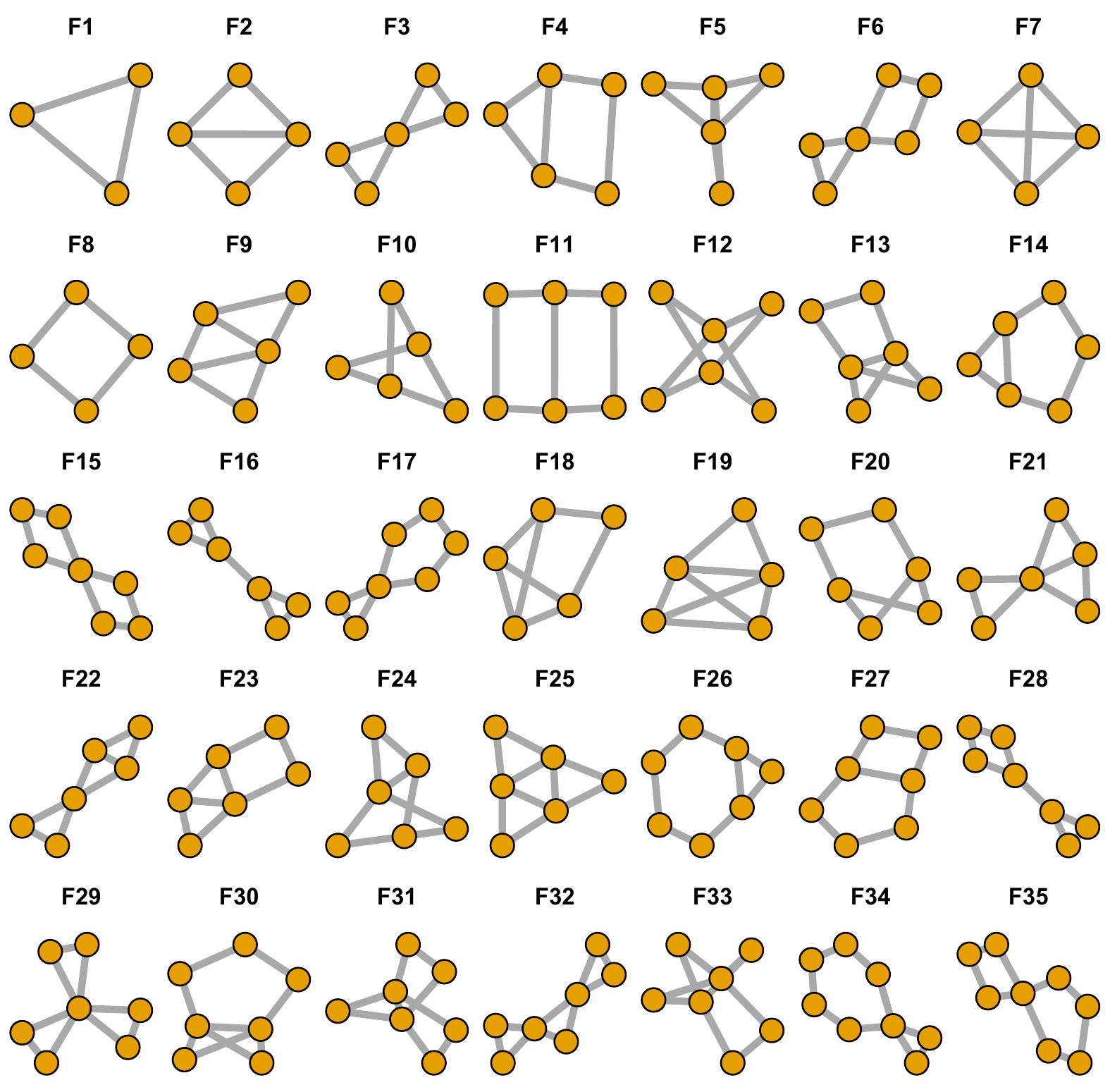}
\caption{\label{FW9} All subgraphs that can be induced by closed walks of length $k$ ($k\leq 9$) and that do not backtrack at any step.}
\end{figure}

We now turn to computing the total count of the $k$-cycles for $k\leq 9$. We use the exact same approach as in~\cite{harary71cycle,alon1997cycles,perepeshko2009cycle}, but instead of using the adjacency matrix $A$, we use the non-backtracking matrix $B$. Therefore, where their method requires to compute the number of copies of subgraphs that can be induced by all walks of length $k$, we need only compute the number of copies of subgraphs that can be induced by all closed walk of length $k$ such that visiting the walk twice in succession does not involve traversing the same edge twice in succession (walks that do not backtrack at any step, not event the first or the last). We list and name all these subgraphs in Fig.~\ref{FW9}.

That each subgraph is induced by a walk allows to automate the production of the formulas, as for instance seen in~\cite{alon1997cycles,perepeshko2009cycle} to count cycles with the adjacency matrix. The correction terms in the following formulas---of the form $-f_F(B)$---are obtained automatically by counting the number of walks of the same length inducing $F$ (which we implemented very inefficiently by enumerating all possible closed walks of length $k$ in $F$ and keeping only those visiting all edges and vertices). 

We write $\diag(A)$ for the column vector containing the main diagonal of $A$ and $\Sigma A$ for the sum of the elements of $A$. To avoid too many repetitions in the formula below, and confusion with the previous set of formulas, we introduce the following matrices: $B_{tr}=B\cdot B_2^\top$, $B_{sq}= B^2B^{2\top}$, $B_{pn} = B^2\cdot B^{3\top}$, $B_{trtr} = \diag(B^3)\diag(B^3)^\top$, $B_{sqtr} = \diag(B^3)\diag(B^4)^\top$, $B_{sql} = B_{sq}\cdot \underline B^2$, $B_{sqr} = B_{sq}\cdot \overline B^2$, $B_{sqtt} = \overline B_{sq}\cdot B^3$, $B_{sqd} = (B^4_{e\bar e})_{e\in E(G)}{\bf 1}^\top$, $B_{pnt} = B_{pn}\cdot \underline B^2\cdot  \overline B^2$.

\begin{align*}
        f_{F1} : B&\mapsto {\bf 1}^\top \diag(B^3)/6\\
        f_{F2} : B&\mapsto {\bf 1}^\top\binom{\diag(B^3)}{2}/2\\
        f_{F3} : B&\mapsto [\Sigma(B\cdot B_{trtr})\\
                                &\qquad\qquad-12f_{F2}(B)-6f_{F1}(B)]/8\\
        f_{F4} : B&\mapsto [\Sigma(\diag(B^3)\cdot \diag(B^4))\\
                                &\qquad\qquad-8f_{F2}(B)]/2\\
        f_{F5} : B&\mapsto {\bf 1}^\top\binom{\diag(B^3)}{3}/2\\
        f_{F6} : B&\mapsto [\Sigma(B\cdot B_{sqtr})\\
                                &\qquad\qquad-6f_{F4}(B)-24f_{F5}(B)-16f_{F2}(B)]/4\\
        f_{F7} : B&\mapsto {\bf 1}^\top B^\boxtimes{\bf 1}/6\\
        f_{F8} : B&\mapsto \Sigma B_{sq}/8\\
        f_{F9} : B&\mapsto \Sigma(B_{sqr}\cdot B^3)/2\\
        f_{F10} : B&\mapsto \Sigma B_{sqtt}/12\\
        f_{F11} : B&\mapsto \Big[{\bf 1}^\top\binom{\diag(B^4)}{2}\\
                                &\qquad\qquad-2f_{F9}(B)-12f_{F10}(B)-12f_{F7}(B)\Big]/2\\
        f_{F12} : B&\mapsto \Sigma\left(B_{sq}\cdot \binom{\overline B^3}{2}\right)/24\\
        f_{F13} : B&\mapsto \Sigma\left(B_{sq}\cdot \binom{B^3}{2}\right)/2\\
        f_{F14} : B&\mapsto [\Sigma(\diag(B^3)\cdot \diag(B^5))\\
                                &\qquad\qquad-4f_{F4}(B)-2f_{F9}(B)]/2\\
        f_{F15} : B&\mapsto [\Sigma(B\cdot B_{trtr})\\
                                &\qquad\qquad-8f_{F8}(B)-48f_{F10}(B)-48f_{F12}(B)\\
                                &\qquad\qquad-12f_{F11}(B)-16f_{F13}(B)-20f_{F9}(B)\\
                                &\qquad\qquad-72f_{F7}(B)]/8\\
        f_{F16} : B&\mapsto [\Sigma((1-B^\top)\cdot B^2\cdot B_{trtr})\\
                                &\qquad\qquad-12f_{F2}(B)-24f_{F7}(B)-16f_{F3}(B)\\
                                &\qquad\qquad-6f_{F9}(B)]/8\\
        f_{F17} : B&\mapsto [\Sigma(B\cdot(\diag(B^3)\diag(B^5)^\top))\\
                                &\qquad\qquad-6f_{F14}(B)-8f_{F4}(B)-10f_{F9}(B)\\
                                &\qquad\qquad-16f_{F13}(B)]/4\\
        f_{F18} : B&\mapsto \Sigma B_{pnt}/4\\
        f_{F19} : B&\mapsto \Sigma(B_{pnt}\cdot B^{2\top})/4\\
        f_{F20} : B&\mapsto \Sigma\left(B^2\cdot \binom{B^{3\top}}{2}\right)/2\\
        f_{F21} : B&\mapsto [\Sigma(B_{sqr}\cdot B_{sqd})\\
                                &\qquad\qquad-8f_{F2}(B)-4f_{F9}(B)-24f_{F5}(B)]/4\\
        f_{F22} : B&\mapsto [\Sigma(B_{sql}\cdot B_{sqd})\\
                                &\qquad\qquad-48f_{F7}(B)-4f_{F9}(B)-8f_{F19}(B)]/4\\
        f_{F23} : B&\mapsto \Sigma(b_{sqr}\cdot B^4)\\
                                &\qquad\qquad-24f_{F7}(B)-2f_{F9}(B)-4f_{F19}(B)\\
        f_{F24} : B&\mapsto [\Sigma(b_{sqtt}\cdot B^3)\\
                                &\qquad\qquad-4f_{F18}(B)]/2\\
        f_{F25} : B&\mapsto [\Sigma(b_{sql}\cdot B^3\cdot (\diag(B^3){\bf 1}^\top))\\
                                &\qquad\qquad-2f_{F9}(B)-8f_{F19}(B)]/6\\
        f_{F26} : B&\mapsto [\Sigma(\diag(B^3)\cdot \diag(B^6))\\
                                &\qquad\qquad-6f_{F1}(B)-20f_{F2}(B)-24f_{F3}(B)-12f_{F5}(B)\\
                                &\qquad\qquad-48f_{F7}(B)-12f_{F9}(B)-4f_{F14}(B)-8f_{F19}(B)\\
                                &\qquad\qquad-8f_{F21}(B)-4f_{F22}(B)-2f_{F23}(B)]/2\\
        f_{F27} : B&\mapsto [\Sigma(\diag(B^4)\cdot \diag(B^5))\\
                                &\qquad\qquad-6f_{F4}(B)-8f_{F20}(B)-8f_{F18}(B)-4f_{F19}(B)\\
                                &\qquad\qquad-2f_{F23}(B)-4f_{F24}(B)]/2\\
        f_{F28} : B&\mapsto [\Sigma((1-B^\top)\cdot B^2\cdot B_{sqtr})\\
                                &\qquad\qquad-12f_{F2}(B)-8f_{F4}(B)-36f_{F5}(B)-8f_{F6}(B)\\
                                &\qquad\qquad-24f_{F7}(B)-6f_{F9}(B)-12f_{F18}(B)-4f_{F19}(B)\\
                                &\qquad\qquad-4f_{F21}(B)-3f_{F23}(B)-4f_{F24}(B)]/4\\
        f_{F29} : B&\mapsto [\Sigma(B\cdot B^4\cdot \underline B^3\cdot \overline B^3)\\
                                &\qquad\qquad-8f_{F2}(B)-16f_{F3}(B)-24f_{F5}(B)-24f_{F7}(B)\\
                                &\qquad\qquad-14f_{F9}(B)-28f_{F21}(B)]/48\\
        f_{F30} : B&\mapsto \Big[\Sigma\left(\binom{\diag(B^3)}{2}\cdot \diag (B^5)\right)\\
                                &-4f_{F18}(B)-8f_{F19}(B)-2f_{F23}(B)-6f_{F25}(B)\Big]/2\\
        f_{F31} : B&\mapsto \Big[\Sigma\left(\binom{\diag(B^4}{2}\cdot \diag(B^3))\right)\\
                                &\qquad\qquad-12f_{F5}(B)-24f_{F7}(B)-2f_{F9}(B)\\
                                &\qquad\qquad-8f_{F18}(B)-10f_{F19}(B)-2f_{F23}(B)\\
                                &\qquad\qquad-2f_{F24}(B)-6f_{F25}(B)\Big]/2\\
        f_{F32} : B&\mapsto [\Sigma(B^2\cdot B^3\cdot B_{trtr})\\
                                &\qquad\qquad-4f_{F2}(B)-12f_{F5}(B)-72f_{F7}(B)\\
                                &\qquad\qquad-14f_{F9}(B)-24f_{F19}(B)-8f_{F21}(B)\\
                                &\qquad\qquad-8f_{F22}(B)-24f_{F25}]/8\\
        f_{F33} : B&\mapsto \Sigma\left(B^{2\top}\cdot \binom{B^3}{3}\right)/2\\
        f_{F34} : B&\mapsto \Big[\Sigma(B\cdot(\diag(B^3)\diag(B^6)^\top))\\
                                &\qquad\qquad-6f_{F1}(B)-36f_{F2}(B)-48f_{F3}(B)-36f_{F5}(B)\\
                                &\qquad\qquad-144f_{F7}(B)-44f_{F9}(B)-8f_{F14}(B)-48f_{F19}(B)\\
                                &\qquad\qquad-48f_{F21}(B)-16f_{F22}(B)-10f_{F23}(B)-6f_{F26}(B)\\
                                &\qquad\qquad-48f_{F29}(B)-16f_{F30}(B)-8f_{F31}(B)-16f_{F32}(B)\\
                                &\qquad\qquad-60f_{F25}(B)\Big]/4\\
        f_{F35} : B&\mapsto \Big[\Sigma(B\cdot (\diag(B^4)\diag(B^5)^\top))\\
                                &\qquad\qquad-10f_{F4}(B)-16f_{F20}(B)-24f_{F18}(B)-28f_{F19}(B)\\
                                &\qquad\qquad-10f_{F23}(B)-20f_{F24}(B)-6f_{F27}(B)-8f_{F30}(B)\\
                                &\qquad\qquad-16f_{F31}(B)-24f_{F33}(B)-36f_{F25}(B)\Big]/4\\
        f_{C_3} : B&\mapsto f_{F1}(B)\\
        f_{C_4} : B&\mapsto \Sigma(\diag(B^4))/8\\
        f_{C_5} : B&\mapsto \Sigma(\diag(B^5))/10\\
        f_{C_6} : B&\mapsto [\Sigma(\diag(B^6))\\
                                &\qquad\qquad-6f_{F1}(B)-12f_{F2}(B)-24f_{F3}(B)]/12\\
        f_{C_7} : B&\mapsto [\Sigma(B^6\cdot B^\top)\\
                                &\qquad\qquad-28f_{F2}(B)-14f_{F4}(B)-84f_{F5}(B)-28f_{F6}(B)]/14\\
        f_{C_8} : B&\mapsto [\Sigma(B^6\cdot B^{2\top})\\
                                &\qquad\qquad-144f_{F7}(B)-8f_{F8}(B)-16f_{F4}(B)-64f_{F9}(B)\\
                                &\qquad\qquad-48f_{F10}(B)-16f_{F11}(B)-96f_{F12}(B)-96f_{F13}(B)\\
                                &\qquad\qquad-16f_{F14}(B)-32f_{F15}(B)-32f_{F16}(B)-32f_{F17}(B)]/16\\
        f_{C_9} : B&\mapsto [\Sigma(B^6\cdot B^{3\top})\\
                                &\qquad\qquad-6f_{F1}(B)-36f_{F2}(B)-72f_{F3}(B)-18f_{F4}(B)\\
                                &\qquad\qquad-36f_{F5}(B)-288f_{F7}(B)-90f_{F9}(B)-18f_{F14}(B)\\
                                &\qquad\qquad-108f_{F18}(B)-180f_{F19}(B)-36f_{F20}(B)\\
                                &\qquad\qquad-108f_{F21}(B)-36f_{F22}(B)-72f_{F23}(B)\\
                                &\qquad\qquad-72f_{F24}(B)-288f_{F25}(B)-18f_{F26}(B)\\
                                &\qquad\qquad-18f_{F27}(B)-36f_{F28}(B)-144f_{F29}(B)\\
                                &\qquad\qquad-108f_{F30}(B)-108f_{F31}(B)-72f_{F32}(B)\\
                                &\qquad\qquad-108f_{F33}(B)-36f_{F34}(B)-36f_{F35}(B)]/18
\end{align*}
\bibliographystyle{unsrt}%
\bibliography{Profiler_bib}
\end{document}